\def\eqref#1{equation~\ref{#1}}
\def\1{\bm{1}}
\def\ve{{\bm{e}}}
\def\vu{{\bm{u}}}
\def\vv{{\bm{v}}}
\def\vw{{\bm{w}}}
\def\vx{{\bm{x}}}
\DeclareMathAlphabet{\mathsfit}{\encodingdefault}{\sfdefault}{m}{sl}
\SetMathAlphabet{\mathsfit}{bold}{\encodingdefault}{\sfdefault}{bx}{n}
\newcommand{\commentpart}[1]{}
\newcommand{\commentfalse}[1]{}
\def \setP{\mathcal{P}}
\def \setG{\mathcal{G}}
\def \setN{\mathcal{N}}
\def \setS{\mathcal{S}}
\def \setV{\mathcal{V}}
\def \setX{\mathcal{X}}
\def \setE{\mathcal{E}}
\def \ve{{\mathbf e}}
\def \vu{{\mathbf u}}
\def \vv{{\mathbf v}}
\def \vw{{\mathbf w}}
\def \vx{{\mathbf x}}
\def \vA{{\mathbf A}}
\def \vE{{\mathbf E}}
\def \vI{{\mathbf I}}
\def \vM{{\mathbf M}}
\def \vQ{{\mathbf Q}}
\def \vU{{\mathbf U}}
\newtheorem{proposition}{Proposition}
\newtheorem{lemma}{Lemma}
\newtheorem{theorem}{Theorem}
\newtheorem{corollary}{Corollary}
\newtheorem{fact}{Fact}
\newtheorem{definition}{Definition}
\newcommand{\DkS}{D$k$S}
\title{Differentially Private and Scalable Estimation of the Network Principal Component}
\author{\name Alireza Khayatian \email alireza.khayatian@kuleuven.be \\
      \addr ESAT-STADIUS\\
      KU Leuven
      \AND
      \name Anil Vullikanti \email asv9v@virginia.edu \\
      \addr Department of Computer Science\\
      University of Virginia
      \AND
      \name Aritra Konar \email aritra.konar@kuleuven.be\\
      \addr ESAT-STADIUS \\
      KU Leuven}
\begin{document}

\maketitle

\begin{abstract}
Computing the principal component (PC) of the adjacency matrix of an undirected graph has several applications ranging from identifying key vertices for influence maximization and controlling diffusion processes, to discovering densely interconnected vertex subsets. However, many networked datasets are sensitive, which necessitates private computation of the PC for use in the aforementioned applications. Differential privacy has emerged as the gold standard in privacy-preserving data analysis, but existing DP algorithms for private PC suffer from low accuracy due to large noise injection or high complexity. Motivated by the large gap between the local and global sensitivities of the PC on real-graphs, we consider instance-specific mechanisms for privately computing the PC under edge-DP. These mechanisms guarantee privacy for all datasets, but provide good utility on ``well-behaved'' datasets by injecting smaller amounts of noise. More specifically, we consider the Propose-Test-Release (PTR) framework. Although computationally expensive in general, we design a novel approach for implementing a PTR variant in the same time as computation of a non-private PC, while offering good utility. 
Our framework tests in a differentially-private manner whether a given graph is ``well-behaved'' or not, and then tests whether its private to release a noisy PC with small noise. 
As a consequence, this also leads to the first DP algorithm for the Densest-$k$-subgraph problem, a key graph mining primitive.
We run our method on diverse real-world networks, with the largest having 3 million vertices, and compare its utility to a pre-existing baseline based on the private power method (PPM).
Although PTR requires a slightly larger privacy budget, on average, it achieves a 180-fold improvement in runtime over PPM.
\end{abstract}



\section{Introduction}

\noindent {\bf Background:} The principal component $\vv$ of a network, i.e., the eigen-vector corresponding to the largest eigen-value of the graph adjacency matrix (and more generally, the other principal components), has been found useful in diverse tasks in graph mining~\cite{bonacich2007some, das2018study, le2015met}.
The component $v_i$ corresponding to node $i$, is commonly referred to as its eigen-vector centrality.
This notion has been used to identify critical nodes in many types of networks, e.g., social and biological networks~\cite{lohmann2010eigenvector, jalili2016evolution, das2018study}.
It has been shown that nodes with high eigen-vector centrality are good solutions for influence maximization~\cite{dey2019influence, maharani2014degree, deng2017novel}.
Interestingly, vaccination of high eigen-vector centrality nodes has also been found to be effective in reducing diffusion processes~\cite{van2011decreasing, saha2015approximation, doostmohammadian2020centrality}.
More generally, eigen-vector centrality is used as a standard baseline in most analyses where a set of well connected nodes need to be selected.

Another application of the principal component is in  
identifying highly interconnected subsets of vertices in a graph, which is a fundamental problem in graph mining with wide-ranging applications spanning bioinformatics, social network analysis, and fraud detection (see \cite{lanciano2024survey} and references therein). The densest-$k$-subgraph (D$k$S) \cite{feige2001dense} corresponds to a vertex subset of pre-specified size $k$ with the largest (induced) edge density in a graph.
The \DkS{} problem falls within the broader class of density-based subgraph detection techniques, which include popular measures such as the core decomposition \cite{seidman1983network} and the densest subgraph problem \cite{charikar2000greedy,goldberg1984finding}.
In contrast to these approaches, the \DkS{} formulation possesses the built-in feature of explicit control over the size of the selected subgraph. This is an important advantage enjoyed by \DkS{}, as the lack of size control causes the latter approaches to output large subgraphs that are sparsely interconnected in real-world graphs \cite{tsourakakis2013denser,shin2016corescope}.
On the flip-side, the \DkS{} problem is computationally intractable in then worst-case, and achieving good approximation guarantees is provably hard under standard complexity-theoretic assumptions~\cite{manurangsi2017almost,jones2023sum}.
Notwithstanding these negative results, 
the work of \cite{papailiopoulos2014finding} demonstrated that employing a low-rank approximation of the adjacency matrix based on its principal components can work well in practice, and also provides data dependent approximation guarantees. In particular, it was demonstrated that simply using a rank-$1$ approximation based on the principal component $\vv$ can provide effective approximation for D$k$S.

\noindent {\bf Differential privacy:} In many applications such as public health, social networks, and finance, network datasets comprise sensitive information, and re-identification and other kinds of privacy risks are serious issues \cite{yuan2024unveiling, peng2012two}. 
For instance, consider a social network where the identities of the vertices (i.e., individuals) are public, but the edges (i.e., personal contacts) are private. However, merely concealing the link structure is insufficient to ensure privacy, as an adversary can still infer the presence or absence of specific edges by analyzing query outputs from the dataset \cite{jiang2021applications}.
Although several privacy models have been proposed, Differential Privacy (DP)~\cite{dwork2014algorithmic,dwork2006calibrating} has emerged as a powerful and broadly adopted framework for developing algorithms that offer rigorous, quantifiable privacy guarantees without making assumptions about the adversary’s knowledge or capabilities. In the context of graphs, there are two standard notions of privacy - edge privacy, where vertices are public and edges are private. In this case, the objective is to prevent the disclosure of the existence/non-existence of an arbitrary edge in the input graph. On the other hand, in node privacy \cite{kasiviswanathan2013analyzing,blocki2013differentially}, vertices are also private, and the goal is to protect any arbitrary vertex and its incident edges.
DP algorithms have been developed for many graph problems, ranging from clustering and community detection, \cite{epasto2022scalable,mohamed2022differentially,nguyen2024differentially}, to personalized PageRank \cite{epasto2022differentially,wei2024differentially}, and computing different kinds of graph statistics ~\cite{karwa2014private, gupta2012iterative, blocki2013differentially}. 

There has also been work on  computing principal components under edge DP 
\cite{kapralov2013differentially, chaudhuri2012near, hardt2013beyond,hardt2014noisy,balcan2016improved, gonem2018smooth}.
However, as we discuss in Section~\ref{sec:related}, prior methods either do not have adequate accuracy, or do not scale well even to networks of moderate size.
This is the main motivation of our work.
We also note that the D$k$S problem has not been studied under DP.



\definecolor{darkgreen}{rgb}{0.0, 0.5, 0.0}

\noindent {\bf Contributions:} In this paper, we develop a scalable algorithm for differentially private computation of the principal component (PC) of the graph adjacency matrix under edge-DP. This also results in the first edge-DP algorithm for the \DkS{}, leveraging the non-private low-rank approximation approach of \cite{papailiopoulos2014finding}. 
DP algorithms achieve their privacy guarantees by adding controlled amounts of noise to non-private computations. 
While algorithms for privately computing PCs of (general) matrices are known, as mentioned above,  one issue is that the DP guarantees provided by these methods are based on worst-case outcomes across {\em all} possible datasets.
This can result in adding excessive noise, which has a detrimental effect on utility, since a given dataset need not be representative of the worst-case. To address this problem, the work of \cite{gonem2018smooth} developed a technique for privately computing PCs for ``well-behaved'' datasets that inject smaller amounts of noise  via the smooth sensitivity framework of \cite{nissim2007smooth}. Such instance-specific mechanisms are appealing since they are private for every dataset, but can offer improved accuracy on ``well-behaved'' datasets.
Our work also concerns the use of instance-specific mechanisms for computing the PC under edge-DP, albeit we use the Propose-Test-Release (PTR) framework \cite{dwork2009differential}, since smooth sensitivity gives poor results in our context (see further below). In general, PTR is not ``user-friendly''  as it involves computations which are often challenging to implement in polynomial-time. 
Hence, the main contribution of our work is to develop a scalable and practical variant of PTR for private PC. Our contributions can be further summarized below.


\noindent {\bf (1)} Our approach is based on {\em output perturbation}, where the principal component (PC) of the adjacency matrix is first computed non-privately followed by a one-shot noise addition step to provide DP. First, we derive a new $\ell_2$ local sensitivity bound under edge DP (Theorem \ref{theo:LS}), which we then use to highlight the large gap between the local and global sensitivities of the PC on real-world graphs.
This implies that the standard approach of calibrating noise to global sensitivity \cite{dwork2014algorithmic} leads to poor accuracy see figure \ref{fig:orkut}), motivating the application of instance-based mechanisms such as smooth sensitivity \cite{nissim2007smooth} or Propose-Test-Release \cite{dwork2009differential}. These mechanisms calibrate noise to local sensitivity-based estimates to provide DP, thereby offering improved privacy-utility trade-offs. However, a key challenge is that they are difficult to implement in practice, as they need not involve polynomial-time computations in general.

\noindent {\bf (2)} We then derive a tight analysis of prior bounds on computing private PC via smooth sensitivity~\cite{gonem2018smooth} (developed to address the high computational complexity of~\cite{nissim2007smooth}), and show that these are very close to the global sensitivity value, in general (see \autoref{smooth_sens}).

\noindent {\bf (3)} Given the unsuitability of smooth sensitivity, we shift our focus to using the Propose-Test-Release (PTR) framework \cite{dwork2009differential, li2024computing}. At a high level, PTR requires proposing a bound $\beta$ for the local sensitivity of the PC on the given dataset, followed by a differentially private test to see if adding noise to the PC calibrated to $\beta$ would violate privacy. However, PTR is computationally very expensive in general, and the only prior work for implementing PTR for graph problems in polynomial time was~\cite{li2023private} for an unrelated problem of computing epidemic metrics. 
Our main contribution is to introduce a computationally efficient and practical PTR variant for private PC. It features a differentially-private test to filter out graphs which are not ``well-behaved'' without resulting in false positives via the Truncated Biased Laplace (TBL) mechanism \cite{xiao2025one}, followed by a second private test to compute a lower bound on the distance to instances with local sensitivity bound of $\beta$ based on a novel technique (see Theorem \ref{theo:upp_bound}), and a more efficient algorithm for selecting the proposed bound $\beta$ (see Theorem \ref{ptr_success} and Proposition \ref{ptr_parameters}).
\emph{Our methods reduce the complexity of PTR for private PC to basically the same as computation of the PC, which gives us significant improvement in terms of running time}.



We also employ the iterative private power method (PPM) of \cite{hardt2014noisy} as a baseline for comparison. Through experiments on real graphs, we provide compare and contrast the efficacy of these two approaches for two applications: {\bf (A1)} privately extracting the subset with the top-$k$ eigenscores, and {\bf (A2)}
private approximate D$k$S. A representative example for D$k$S is provided in Figure \ref{fig:orkut}, which depicts the edge-density (whihc measures utility) versus size trade-off on a social network with $3$ million vertices and $120$ million edges. The yellow line is the rank-$1$ non-private algorithm of \cite{papailiopoulos2014finding}, whereas the red and blue lines depict PTR and PPM respectively. 
Both algorithms perform comparably to the non-private version (PPM offers slightly better privacy and accuracy). However, PTR is $700$ times faster than PPM as it's a one-shot noise addition mechanism.


\begin{figure}[t]
\centering
\includegraphics[width=0.42\textwidth]{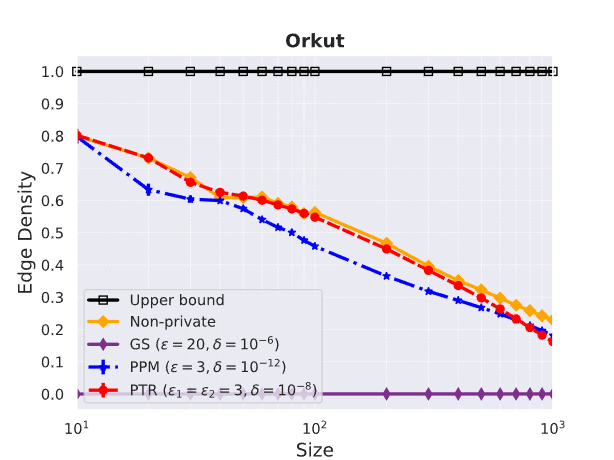}
\caption{Comparison of the edge-density (y-axis) versus subgraph size (x-axis) trade-off on the Orkut social network with $3M$ vertices and $120M$ edges. PPM (blue) and PTR (red) ($(\epsilon_0,\delta_0)=(1,7e^{-7})$) attain performance comparable with their non-private counterpart (yellow). However, PTR is $\approx 700$ times faster than PPM. The standard Gaussian mechanism with noise calibrated to global sensitivity (purple) yields poor results.}
\label{fig:orkut}
\end{figure}

\section{Related Work} 
\label{sec:related}

We briefly summarize relevant related work here.
We first discuss the use of nodes with high eigen-vector centrality in different applications, and the 
\DkS{} problem, and then the problems of differential privacy for graph problems and private principal component analysis (PCA).

\textbf{Eigen-vector centrality and applications:}
Centrality metrics are common tools used in network analysis.
The eigen-vector centrality (also referred to as eigenscore) of node $i$ is defined as the $i$th component of the principal component $\vv$ of $G$~\cite{bonacich2007some, das2018study, le2015met}; this notion has been found very effective in a number of applications.
For instance, nodes with high eigen-vector centrality have been found to be effective for controlling diffusion processes on networks, both for maximizing influence or information spread~\cite{dey2019influence, maharani2014degree, deng2017novel} and for controlling the spread of epidemic processes by choosing such nodes for vaccination interventions~\cite{van2011decreasing, saha2015approximation, doostmohammadian2020centrality}.
Eigen-vector centrality has also been used in other domains such as biological networks for identifying essential proteins and  critical components~\cite{lohmann2010eigenvector, jalili2016evolution}.

\textbf{The \DkS{} problem:}
The \DkS{} problem is computationally very hard, and it has been shown that an $O(n^{1/(\log\log{n})^c})$ approximation is not possible under certain complexity theoretic assumptions~\cite{manurangsi2017almost} (see other discussion on its complexity in~\cite{khot2006ruling, manurangsi2017almost, bhaskara2012polynomial}). Efficient algorithms which work well for practical instances of D$k$S range from low-rank approximations~\cite{papailiopoulos2014finding} and the convex and non-convex relaxations of \cite{konar2021exploring} and \cite{lu2025densest}.
We mention in passing that there are variants of D$k$S, such as the Densest at-least-$k$ Subgraph problem (Dal$k$S), the Densest at-most-$k$ Subgraph problem (Dam$k$S) \cite{andersen2009finding,khuller2009finding}, and the $f$-densest subgraph problem \cite{kawase2018densest}. Although these variants impose size constraints on the extracted subgraph, these formulations do not guarantee that the entire spectrum of densest subgraphs (i.e., of every size) can be explored.

\noindent \textbf{Differential Privacy for Graphs:} In the context of graph datasets, there are two standard notions of privacy - edge privacy \cite{blocki2013differentially}, where vertices are public and edges are private. In this case, the objective is to prevent the disclosure of the existence/non-existence of an arbitrary edge in the input graph. On the other hand, in node privacy \cite{kasiviswanathan2013analyzing}, vertices are also private, and the goal is to protect any arbitrary vertex and its incident edges.

Early work on differentially private graph algorithms focused on computing basic statistics. Nissim et al. introduced DP for graph computations, providing methods to privately release the cost of minimum spanning trees and triangle counts using smooth sensitivity \cite{nissim2007smooth}. Karwa et al. \cite{karwa2014private} extended these techniques to other subgraph structures, such as k-stars and k-triangles. Hay et al. \cite{hay2009accurate} explored DP mechanisms for releasing degree distributions, while Gupta et al. \cite{gupta2012iterative} developed methods for privately answering cut queries.
Node privacy, being more sensitive, is more challenging to ensure. Kasiviswanathan et al. \cite{kasiviswanathan2013analyzing} and Blocki et al. \cite{blocki2013differentially} addressed this issue by developing node-DP algorithms using Lipschitz extensions for subgraph counting.

Recently, two concurrent works \cite{farhadi2022differentially, nguyen2021differentially} introduced edge DP algorithms for computing the densest subgraph (DSG), which aims to find the subgraph that maximizes the average induced degree \cite{goldberg1984finding}. Building on the non-private greedy peeling algorithm of \cite{charikar2000greedy} for DSG,
Nguyen et al. \cite{nguyen2021differentially} employed the exponential mechanism \cite{dwork2014algorithmic} to design an $(\epsilon,\delta)$-DP algorithm which can be executed in both sequential and parallel versions. Meanwhile, Farhadi et al. \cite{farhadi2022differentially} applied the Prefix Sum Mechanism \cite{chan2011private} to privatize Charikar's greedy peeling algorithm for DSG, resulting in a pure $(\epsilon,0)$-DP algorithm that runs in linear time. Most recently, the work of \cite{dinitz2025almost} proposed an $(\epsilon,\delta)$-DP algorithm that is based on using a private variant of the multiplicative weights framework \cite{arora2012multiplicative} for solving a tight linear programming relaxation of DSG proposed in \cite{chekuri2022densest}.
In a separate but related work, Dhulipala et al. \cite{dhulipala2022differential} develop algorithms for releasing core numbers with DP, which are generated using a variant of the greedy peeling algorithm. At present, we are unaware of any DP algorithm for the D$k$S problem. 

\noindent \textbf{Private Principal Component Analysis (PCA):} 
The topic of differentially private PCA has been explored previously in various studies, including \cite{chaudhuri2012near, hardt2013beyond, kapralov2013differentially, blum2005practical, dwork2014analyze}. Initially, \cite{blum2005practical,dwork2014analyze} considered an general input perturbation approach for providing DP by adding random Gaussian noise to the empirical covariance matrix constructed from the data points. However, this method requires access to the covariance matrix, which is computationally expensive in both space and time. 
Although graph adjacency matrices are not covariances (as they may not be positive semi-definite), the classic randomized response technique of Warner \cite{warner1965randomized} can be applied to generate DP adjacency matrices, with the caveat that it can result in injection of large amount of noise, which manifests in creating dense graphs with very different properties from the actual dataset. This can result in poor utility for node subset selection problems (see \cite[Appendix A]{farhadi2022differentially}).

An alternative approach based on output perturbation that utilizes the exponential mechanism to output principal components with $(\epsilon,0)$-DP was explored in \cite{kapralov2013differentially, chaudhuri2012near}. However, the theoretical analysis of this method shows that it incurs a high time complexity of $O(n^6)$ (here $n$ represents the dimension), which renders it impractical for application on even moderately sized datasets. 
Meanwhile, the work of \cite{hardt2013beyond,hardt2014noisy,balcan2016improved} consider privatizing power iterations for computing private subspaces with DP. This iterative approach comprises of interleaved steps of matrix vector multiplication followed by noise injection and re-normalization to provide $(\epsilon,\delta)$-DP. While it can perform well in practice, selecting its parameters requires some trial-and-error and its iterative nature can result in high-complexity when applied to large datasets.
Closest to our present work is the approach of \cite{gonem2018smooth}, which considered using the smooth sensitivity framework of \cite{nissim2007smooth} to privately compute principal components via output perturbation. However, in practice, smooth sensitivity often fails to reduce the noise sufficiently. We address this issue by designing a method to significantly reduce the noise level in output perturbation using the propose-test-release (PTR) framework.

\section{Preliminaries}

\noindent \textbf{Differential Privacy on Graphs:} Let \( \mathbb{G} \) denote a collection of undirected graphs on a fixed set \( \mathcal{V} \) of nodes. This paper focuses on the notion of edge privacy \cite{blocki2013differentially}, where two graphs \( \mathcal{G}, \mathcal{G}' \in \mathbb{G} \) are considered neighbors, denoted \( \mathcal{G} \sim \mathcal{G}' \), if they have the same vertex set and differ in exactly one edge, meaning \( |\mathcal{E}(\mathcal{G}) - \mathcal{E}(\mathcal{G}')| = 1 \).

\begin{definition}
    \((\epsilon, \delta)\)-Differential Privacy (DP)~\cite{dwork2014algorithmic}. Let \( \epsilon > 0 \) and \( \delta \in [0, 1] \). A randomized algorithm \( \mathcal{A}(\cdot): \mathbb{G} \to \mathbb{R}^k \) is  \((\epsilon, \delta)\)-DP if:
    \begin{equation}
    \small
    \Pr(\mathcal{A}(\mathcal{G}) \in S) \leq \exp(\epsilon) \Pr(\mathcal{A}(\mathcal{G}') \in S) + \delta, \label{eq:dp}
    \end{equation}
    \noindent for any outcome \( S \subseteq \mathbb{R}^k \), and for any pair of neighboring datasets \( (\mathcal{G}, \mathcal{G}') \) that differ in a single edge.   
\end{definition}

\noindent \textbf{Function Sensitivity:}~\cite{dwork2014algorithmic}. 
Consider a function $f:\mathbb{G} \rightarrow \mathbb{R}^k$. The {\em local $\ell_2$ sensitivity} of $f$ exhibited by a dataset $\setG \in \mathbb{G}$ is defined as $\textsf{LS}_{f}(\setG) = \max_{\setG': \mathcal{G} \sim \mathcal{G}'} \| f(\setG)  - f(\setG')\|_{2}$,
where the maximum is taken over all neighboring datasets $\setG'$ which differ in one edge from $\setG$. The {\em global $\ell_2$ sensitivity} of $f$ is then
$\textsf{GS}_{f} = \max_{\setG \in \mathbb{G}} \textsf{LS}_{f}(\setG)
$. If we computed the sensitivity calculations w.r.t. the $\ell_1$ norm instead, we would obtain the $\ell_1$ local and global sensitivities of $f$. Adding i.i.d. Gaussian noise to $f(\cdot)$ with variance appropriately calibrated to global $\ell_2$ sensitivity is guaranteed to satisfy $(\epsilon,\delta)$-DP \cite{dwork2014algorithmic}. By adding Laplacian noise calibrated to global $\ell_1$ sensitivity, the output is guaranteed to be $(\epsilon,0)$-DP \cite{dwork2014algorithmic}. 
An important property of DP is that applying post-processing on the output of a $(\epsilon,\delta)$-DP mechanism does not weaken the privacy guarantee.

\noindent \textbf{Private principal components:} Consider an unweighted, undirected simple graph $\mathcal{G} := (\mathcal{V}, \mathcal{E})$ with vertex set $\mathcal{V} := \{1, \ldots, n\}$ and edge set $\mathcal{E} \subseteq \mathcal{V} \times \mathcal{V}$. The $n \times n$ symmetric adjacency matrix of $\setG$ is denoted as $\mathbf{A}$. The eigen-values of $\vA$ are arranged in non-increasing order of their magnitudes; i.e., we have $|\lambda_1| \geq |\lambda_2| \geq \cdots |\lambda_n|$. The principal eigen-gap of $\setG$ is defined as ${\sf GAP}(\setG) := |\lambda_1| - |\lambda_2|$, while the principal component $\vv$ of $\vA$ is the eigen-vector associated with $|\lambda_1|$. If the graph $\setG$ represented by $\vA$ is connected, the Perron-Frobenius Theorem \cite{meyer2023matrix} asserts that $\vv$ is element-wise positive and $\lambda_1 >0$. 
Following our earlier notation, we consider the function $f(\setG)=\vv$, and
$\textsf{LS}_{\vv}(\setG)=\max_{\setG':\setG \sim \setG'} \|\vv -\vv'\|_2$
In this paper, we develop new techniques for privatizing $\vv$ under edge-DP, with the following two applications.

\noindent \textbf{(A1) Eigen-vector centrality:}
For a node $i$, its component $v_i$ in the principal component $\vv$ is referred to as its eigen-vector centrality or eigenscore.
Nodes with high eigen-vector centrality (i.e., the $k$-largest entries of $\vv$) have been used in a number of applications, such as identifying influential users in social networks and for targeted interventions for controlling epidemic processes, e.g.,~\cite{maharani2014degree, van2011decreasing, saha2015approximation}. 
Here we study how to find the $k$-largest entries of $\vv$ with edge-DP.


\textbf{(A2) Densest-$k$-Subgraph:}
Consider a vertex subset $\mathcal{S} \subseteq \mathcal{V}$ which induces a subgraph $\mathcal{G}_\mathcal{S} = (\mathcal{S}, \mathcal{E}_\mathcal{S})$, where $\mathcal{E}_\mathcal{S} = \{\{u, v\} \in \mathcal{E} \mid u, v \in \mathcal{S}\}$. 
The edge density of the subgraph $\mathcal{G}_\mathcal{S}$ is defined as $d(\mathcal{G}_\mathcal{S}) = \lvert \mathcal{E}_\mathcal{S} \rvert/\binom{|\setS|}{2},$
which measures what fraction of edges in $\setG_{\setS}$ are connected. Note that larger values correspond to dense quasi-cliques, with a maximum value of $1$ for cliques.
The Densest-$k$-Subgraph (D$k$S) problem \cite{feige2001dense} seeks to find a subset of $k$ vertices $\mathcal{S} \subseteq \mathcal{V}$ that maximizes the edge density. Formally, the problem can be expressed as follows
\begin{equation}\label{eq:dks}
\small
d_k^*:=\max_{\vx \in \setX_k} 
\biggl\{ \frac{\vx^T\vA\vx}{\binom{k}{2}} \biggr\}
\end{equation}
where $\mathcal{X}_k := \{\mathbf{x} \in \{0, 1\}^n : \mathbf{e}^T \mathbf{x} = k\}$ represents the combinatorial selection constraints. Here, each binary vector $\mathbf{x} \in \{0,1\}^n$ represents a vertex subset $\mathcal{S} \subseteq \mathcal{V}$, with $x_i = 1$ if $i \in \mathcal{S}$ and $x_i = 0$ otherwise. 

\section{Analyzing sensitivity of principal components}
In order to compute the principal component of $\vA$ under edge-DP , a straightforward idea is to apply the technique of {\em output perturbation}. This can be achieved by computing the global sensitivity of $\vv$ and then applying the Gaussian mechanism to provide $(\epsilon,\delta)$ DP~\cite{dwork2006our}. Since output perturbation is a one-shot noise addition scheme, it is computationally fast, as it incurs only $O(n)$ complexity. 
However, the global $\ell_2$ sensitivity of $\vv$ is large; at most $\sqrt{2}$ \cite[Theorem 5]{gonem2018smooth}. This implies that a large amount of noise has to be injected to achieve a target privacy requirement, which can result in a severe degradation in utility.
On the other hand, the local $\ell_2$ sensitivity can be substantially smaller on many graphs. Note that
adding noise calibrated to the local sensitivity is not guaranteed to be privacy preserving \cite{nissim2007smooth}. However, if we observe that the local $\ell_2$ sensitivity of $\vv$ is much smaller than $\sqrt{2}$ on a given graph, we can employ {\em instance-specific} noise addition schemes \cite{nissim2007smooth,dwork2009differential}, which provide DP by injecting noise calibrated noise to local sensitivity calculations. 
We demonstrate that under a mild requirement on the spectral gap of $\setG$, the following bound on the local sensitivity can be derived.

\begin{theorem}\label{theo:LS}
If $\textsf{GAP}(\setG) > \sqrt{2}(\sqrt{2}+1)$, then the local $\ell_2$ sensitivity of $\vv$ under edge-DP is at most 
\begin{equation}
\small
   \textsf{LS}_{\vv}(\setG) \leq \frac{2\sqrt{v_{\pi(1)}^2 + v_{\pi(2)}^2}}{\textsf{GAP}(\mathcal{G})}= \frac{2c_{\pi}}{\textsf{GAP}(\mathcal{G})},
\end{equation}
where $c_{\pi} = \sqrt{v_{\pi(1)}^2 + v_{\pi(2)}^2}$ and $v_{\pi(1)}$ and $v_{\pi(2)}$ denote the largest and second-largest entries of $\vv$ respectively.
\end{theorem}
\begin{proof}
Please refer to Appendix \ref{appendix:local_sens}. 
\end{proof}

Note that the numerator term in the above upper bound is at most $2\sqrt{2}\|\vv\|_{\infty}$. We conclude that for graphs whose eigen-gap $\textsf{GAP}(\setG)$ exceeds $\sqrt{2}(\sqrt{2}+1)$, the local $\ell_2$ sensitivity of the principal component $\vv$ is small when the entries of $\vv$ are ``spread out'' in magnitude (i.e., $\|\vv\|_{\infty}$ is small ) and the eigen-gap is large.

The value of this estimate for various real-world graph datasets is presented in \autoref{tab:local_sensitivity}, and contrasted with the global sensitivity upper bound of $\sqrt{2}$. It is evident that the local sensitivity of $\vv$ on real graphs can be at least $2$ orders of magnitude smaller compared to the global sensitivity estimate of $\sqrt{2}$.
This motivates the application of instance-specific noise-addition mechanisms for providing DP on such ``good'' instances, the first of which we consider is the smooth sensitivity framework of \cite{nissim2007smooth}. 
However, practical application of this framework is challenging, as it entails computations which need not be polynomial-time. The prior work of \cite{gonem2018smooth} developed tractable smooth upper bounds on the smooth sensitivity for computing principal components of general datasets; albeit not for graphs under edge-DP. In Appendix \ref{smooth_sens}, we provide a rigorous analysis which shows that in our graph setting under edge-DP, the obtained smooth upper bound is very close to the global sensitivity estimate of $\sqrt{2}$.

\begin{table}[t]
  \caption{Comparison between local and global sensitivity for real-world datasets.}
  \label{tab:local_sensitivity}
  \centering
  \begin{tabular}{lcccc}
    \toprule
    Graph & $n$ & ${\sf GAP}(\setG)$ & $\textsf{LS}_{\vv}$ & $\textsf{GS}_{\vv} / \textsf{LS}_{\vv}$ \\
    \midrule
    \textsc{Facebook} & 4k & 36.9 & 7e-3 & 202 \\ 
    \textsc{PPI-Human} & 21k & 70.8 & 4e-4 & 3535 \\
    \textsc{soc-BlogCatalog} & 89k & 335.9 & 7e-4 & 2035 \\
    \textsc{Flickr} & 105k & 101.3 & 1e-3 & 1414 \\    
    \textsc{Twitch-Gamers} & 168k & 148.4 & 2e-3 & 539 \\    
    \textsc{Orkut} & 3M & 225.4 & 1e-2 & 141 \\        
    \bottomrule
  \end{tabular}
\end{table}

\section{The Propose-Test-Release Mechanism}

Since adopting the smooth sensitivity-based approach of \cite{gonem2018smooth} does not yield tangible improvements over the global sensitivity, we explore the application of an alternative instance-specific noise injection mechanism; namely, the Propose-Test-Release (PTR) framework introduced in \cite{dwork2009differential}. While PTR and smooth sensitivity share a common aim (i.e., exploiting local sensitivity to add instance-specific noise while preserving DP), PTR is a distinct paradigm, which presents its own unique challenges. 

At a high level, the PTR framework comprises the following steps. Given an upper bound \( \beta \) on the local $\ell_2$ sensitivity $\textsf{LS}_{\vv}(\setG)$, test (in a differentially private manner) whether the current dataset is ``close'' to another with high sensitivity. If so, the algorithm can refuse to yield a response; otherwise,  it can release a private principal component with a small amount of noise (scaled to $\beta$).
The main difficulty in implementing PTR lies in the ``test'' component, which requires computing the Hamming distance to the nearest dataset \( \mathcal{G}' \) whose local sensitivity $\textsf{LS}_{\vv}(\setG')$ exceeds \( \beta \). Computing such a  sensitivity-$1$ statistic requires solving the following problem
\begin{equation}
\small
\gamma(G) =: \biggl\{ \min_{\mathcal{G}'} \, d(\mathcal{G}, \mathcal{G}') \quad \text{s.to} \quad \textsf{LS}_{\vv}(\setG') \geq \beta \biggr\},
\label{eq:P1}
\end{equation} 
which is difficult in general.
In a recent breakthrough, it was shown in \cite{li2024computing} that the PTR framework can still be successfully applied if $\gamma(\mathcal{G})$ is replaced by any non-negative, sensitivity-$1$ statistic $\phi(\mathcal{G})$ that is a global lower bound on $\gamma(\mathcal{G})$; i.e., we have 
\begin{equation}\label{eq:surrogate}
\small
    \gamma(\mathcal{G}) \geq \phi(\mathcal{G}) \geq 0, \forall \; \setG.
\end{equation}
This results in the following modified PTR framework; first introduced in \cite{li2024computing}.
\begin{enumerate}[itemsep=0pt, topsep=2pt]
    \item Propose an upper bound \( \beta \) on the local $\ell_2$ sensitivity $\textsf{LS}_{\vv}(\setG)$.
    \item Compute lower bound $\phi(\setG)$ on the distance $\gamma(\setG)$ defined in problem \eqref{eq:P1}.   
    \item Release \( \hat{\phi}(\mathcal{G}) := \phi(\mathcal{G}) + \text{Lap}\left(\tfrac{1}{\epsilon_1}\right) \).
    \item If \( \hat{\phi}(\mathcal{G}) \leq \tfrac{\ln (1 / \delta)}{\epsilon_1} \),
    return no response.
    \item If \( \hat{\phi}(\mathcal{G}) \geq \tfrac{\ln(1 / \delta)}{\epsilon_1} \),
    return \( \mathbf{v} + \setN\left(0, \sigma^2\cdot \mathbf{I}_n\right) \), where $\sigma^2 = 2\beta^2 \log(2 / \delta)/{\epsilon_2^2}$
\end{enumerate}

The upshot is that {\em if} we can find a suitable surrogate function $\phi(\cdot)$ which satisfies \eqref{eq:surrogate}, and is simpler to compute compared to solving \eqref{eq:P1} for $\gamma(\cdot)$, then we can apply the framework successfully. 
It is important to note that $\beta$ need not be private for the correctness of the above technique to go through.
The work of \cite{li2024computing} shows that the output of the modified-PTR algorithm is $(\epsilon_1+\epsilon_2,\delta)$-DP. However, it does not provide a general recipe for computing such requisite surrogates. Hence, a major contribution of our work lies in constructing a suitable $\phi(\cdot)$ for the  given problem. Below, we provide an overview of our approach (see Figure \ref{fig:flow_chart} for a flow-diagram).

\begin{figure}[t!]
    \captionsetup{justification=raggedright}
    \centering
    \includegraphics[width=0.8\linewidth]{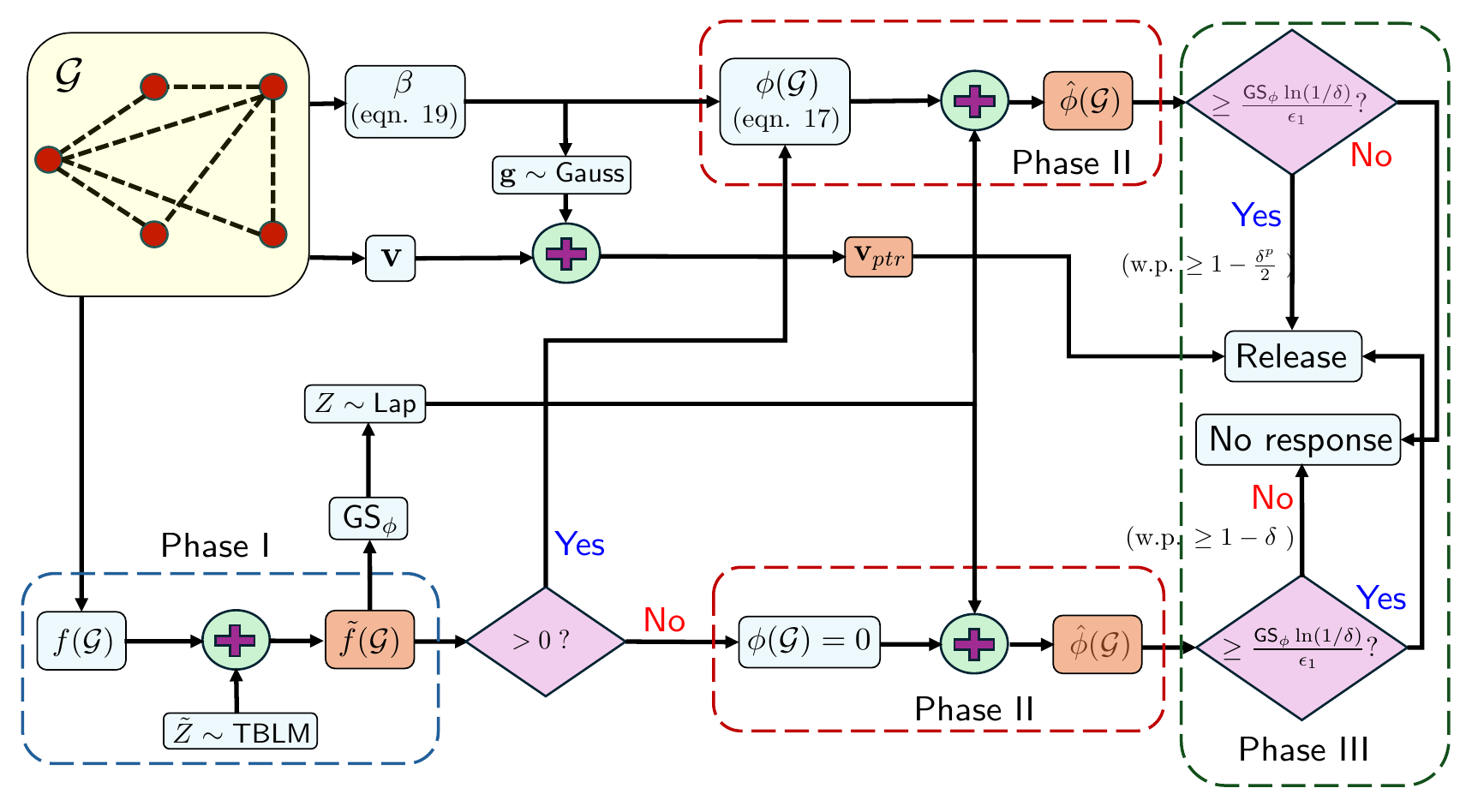}
\caption{
Flow diagram of the proposed PTR algorithm \ref{alg:ptr}. {\em Input:} Undirected graph $\setG$. Under edge-DP, the edges are private but the identities of the vertices are public. The algorithm proceeds in 3 phases, starting from bottom left.
\emph{Phase I}: The truncated biased Laplace mechanism (TBLM) is used to privatize the spectral threshold function $f(\setG):=\textsf{GAP}(\setG)-t$, where the threshold $t:= 2(\sqrt{2}+1)$. The output $\tilde{f}(\setG)$ is then used to test if $\setG$ lies in the large-gap or small-gap regime.
\emph{Phase II}:
Depending on the outcome of Phase I, the distance to instability $\phi(\setG)$ is computed and privatized via the Laplace mechanism to obtain $\hat{\phi}(\setG)$.
\emph{Phase III}:
The output $\hat{\phi}(\setG)$ is compared against a threshold to decide privately whether the noisy principal component $\vv_{ptr}$ should be released or not.
}

\label{fig:flow_chart}
\end{figure}

\noindent {\bf Overview of Algorithm \ref{alg:ptr}:}
At a high level, our algorithm privately releases the principal component of a graph using a one-shot output perturbation scheme, while avoiding the excessive noise required by worst-case global sensitivity bounds. The main idea is to test whether the input graph is sufficiently distant from other instances with high local sensitivity in a differentially private manner, and to use local sensitivity-based calculations to release a noisy eigenvector only if this test succeeds.
Concretely, the algorithm can be broken down into three phases, as illustrated in Figure \ref{fig:flow_chart}:\\
\emph{Phase I}: 
A private \emph{gap test} is carried out to determine whether the graph lies in a ``well-behaved'' regime with sufficiently large spectral gap.\\ 
\emph{Phase II}:
Depending on the outcome of Phase I, a follow-up \emph{distance-to-instability test} is performed to check whether the graph is far from instances with large local sensitivity. If the gap test fails, (i.e., the algorithm determines that the graph lies in the small-gap regime), then it treats the graph as unstable, with $0$ distance to instability. On the other hand, if the gap test succeeds, the algorithm performs a conservative, but tractable test to lower bound the true distance to unstable instances.\\
\emph{Phase III}:
A private check is performed via the Laplace mechanism to determine whether the computed distance exceeds a fixed threshold. If the test fails, the algorithm does not release a response w.h.p. Otherwise, it releases a private estimate of the principal component by adding
Gaussian noise calibrated to a data-dependent sensitivity bound. 


Next, we formalize the main ideas underpinning our approach.

 \noindent {\bf $\bullet$ Construction of $\phi(\cdot)$:}  Our goal is to construct a surrogate $\phi(\cdot)$ which satisfies \eqref{eq:surrogate} for every graph $\setG$. For this task, we consider two regimes: (a) a regime where the eigen-gap of $\setG$ is  ``large enough'' and (b) a complementary regime where the eigen-gap is ``small''. We consider each regime separately, and make these notions concrete. First, we show that for case (a), the local $\ell_2$ sensitivity of $\vv$ on $\mathcal{G}'$
can be appropriately upper bounded in ``simple'' terms. Formally, we have the following claim. 

\begin{theorem}\label{theo:upp_bound}
Assume that the following two conditions hold:

{\bf (A1)} $\small \textsf{GAP}(\mathcal{G}) > 2(\sqrt{2}+1)$, 
{\bf (A2)} $ \small d(\mathcal{G},\mathcal{G}') < \left(1 - \frac{1}{\sqrt{2}}\right)\cdot\textsf{GAP}(\mathcal{G}).$

     Then, the local $\ell_2$ sensitivity of $\vv$ on $\setG'$ can be upper bounded by
    \begin{equation}
    \small
        \textsf{LS}_{\vv}(\mathcal{G}') \leq \theta(\mathcal{G}') :=\frac{2}{\textsf{GAP}(\mathcal{G})-d(\mathcal{G},\mathcal{G}') } \cdot
        \biggl[\frac{2d(\mathcal{G},\mathcal{G}')}{\textsf{GAP}(\mathcal{G})}+ c_{\pi}\biggr].
    \end{equation}
\end{theorem}

\begin{proof}
Please refer to Appendix \ref{appendix:main_theorem}.  
\end{proof}
Consequently, we can replace $\textsf{LS}_{\vv}(\mathcal{G}')$ in \eqref{eq:P1} with the upper bound $\theta(\mathcal{G}')$.
Doing so yields the problem
\begin{equation}\label{eq:P2}
\small
    \begin{aligned}
     \phi(\mathcal{G}) := \min ~ & d(\mathcal{G},\mathcal{G}') \\
        \textrm{s.t.} ~ & \theta(\mathcal{G}') \geq \beta
    \end{aligned}
\end{equation}

Since $\theta(G')$ upper bounds $\textsf{LS}_v(G')$, replacing $\textsf{LS}_v(G') \ge \beta$ with $\theta(G') \ge \beta$ enlarges the feasible set of \eqref{eq:P1}. 
Hence, it follows that that \eqref{eq:P2} is a relaxation of \eqref{eq:P1}, whose solution $\phi(\setG )$ yields a lower bound on 
the true PTR distance $\gamma(\mathcal{G})$, as desired.
Furthermore, by design, $\phi(\mathcal{G})$ is non-negative and has $\ell_1$ sensitivity equal to $1$. Hence, $\phi(\mathcal{G})$ is a suitable candidate for implementing the PTR framework,
as it guarantees that \eqref{eq:surrogate} is satisfied for all graphs for which $ \textsf{GAP}(\mathcal{G}) > 2(\sqrt{2}+1)$, which corresponds to the ``sufficiently large'' gap regime.

Next, we consider the complementary the small gap regime (b),  where $\textsf{GAP}(\mathcal{G}) \leq 2(\sqrt{2}+1)$. Devising a suitable surrogate that obeys \eqref{eq:surrogate} is more challenging compared to the previous case, since even the local sensitivity bound of Theorem \ref{theo:LS} need not apply here. Hence, our goal is to make the algorithm reject such ``unfavorable'' datasets w.h.p. To this end, we devise a surrogate for $\gamma(\setG)$ which outputs $0$ for such datasets. Clearly, such a surrogate satisfies \eqref{eq:surrogate}. A valid choice is the following optimization problem 
\begin{equation}\label{eq:small_phi}
\small
    \begin{aligned}
    \min ~ & d(\mathcal{G},\mathcal{G}') \\
        \textrm{s.t.} ~ & \theta(\mathcal{G}') \geq 0.
    \end{aligned}
\end{equation}
Compared to \eqref{eq:P2}, the proposed bound $\beta$ is replaced by $0$ in the RHS of the constraint. Regarding the above problem, we have the following claim.
\begin{lemma}
\label{phi_zero}
For every graph $\setG$, the optimal value of problem \ref{eq:small_phi} is $0$.     
\end{lemma}
\begin{proof}
See Appendix~\ref{proof_phi_zero}.
\end{proof}

We would like to use \eqref{eq:small_phi} in PTR only for graphs belonging in the small gap regime $\textsf{GAP}(\mathcal{G}) \leq 2(\sqrt{2}+1)$. For other ``well-behaved'' graphs whose gap exceeds the aforementioned threshold, we would like to employ the surrogate \eqref{eq:P2}. For this task, we can combine problems \eqref{eq:P2} and \eqref{eq:small_phi} into the single optimization problem 
\begin{equation}\label{eq:surrogate_combo}
\small
    \begin{aligned}
     \phi(\mathcal{G}) := \min ~ & d(\mathcal{G},\mathcal{G}') \\
        \textrm{s.t.} ~ & \theta(\mathcal{G}') \geq \beta\cdot u(\textsf{GAP}(\mathcal{G})-t),
    \end{aligned}
\end{equation}
where $t:=2(\sqrt{2}+1)$ is the gap threshold, and $u(.)$ is the indicator function $u(x) := \mathbf{1}_{\{x > 0\}}$.
In the large gap regime, problem \ref{eq:surrogate_combo} reduces to \ref{eq:P2}. Otherwise, it boils down to \ref{eq:small_phi}, with value $0$. In Appendix \ref{proof_dp} we show that for graphs whose spectral gap is below the threshold $t=2(\sqrt{2}+1)$, the PTR algorithm is designed to return \textsc{No Response} with probability at least $1-\delta$, as intended. An issue is that instantiating \eqref{eq:surrogate_combo} first requires checking whether the gap of $\setG$ exceeds the threshold $t$ or not, which need not be privacy-preserving. Our proposed solution is to first privatize the function $f(\setG):= \textsf{GAP}(\mathcal{G})-t$ and then employ it in \eqref{eq:surrogate_combo}. Next, we describe how to perform such an operation. 

\noindent {\bf $\bullet$ Privatizing $f(\setG)$ via the Truncated Biased Laplace Mechanism (TBLM):} In order to privatize $f(\setG)$, we first need to compute the global $\ell_1$ sensitivity of $f(\cdot)$ under edge-DP. A calculation reveals that
\begin{equation}
\small
 \textsf{GS}_{f}  = \max_{\setG\sim \setG'}|f(\setG)-f(\setG')|=\max_{\setG\sim \setG'} | \textsf{GAP}(\mathcal{G})- \textsf{GAP}(\mathcal{G}')|\leq 1.
\end{equation}

Hence, it suffices to add Laplacian noise $Z \sim {\sf Lap}(0,1/\epsilon_0)$ to $f(\setG)$ to guarantee $(\epsilon_0,0)$-DP \cite{dwork2014algorithmic}. Denote the noisy output $\tilde{f}(\setG) := f(\setG) +Z$. However, the standard Laplace mechanism adds two-sided symmetric noise to $f(\setG)$, which can result in the outcome that $f(\setG) \leq 0$, but $\tilde{f}(\setG) > 0$. This give rise to undesirable false-positives where we incorrectly employ the surrogate \eqref{eq:P2} instead of \eqref{eq:small_phi}. To prevent such an occurrence, we apply the truncated biased Laplace mechanism (TBLM) \cite{xiao2025one}, which adds one-sided noise to provide $(\epsilon_0,\delta_0)$-DP.
Formally, given parameters $\mu >0$, $\lambda > 0$, and $R>0$, the density of TBL noise is given by
\begin{equation}
{\sf Pr}(Z=z)=\frac{\exp(-|z-\mu|/\lambda)}{Z_{\mu,\lambda,R}}\;\mathbf{1}_{\{0\le z\le R\}},   
\end{equation}
where $\mathbf{1}_{\{0\le z\le R\}}$ is the indicator function of the interval $[0,R]$, and $Z_{\mu,\lambda,R}= {\sf Pr}(0 \leq z \leq R)$
is the normalization parameter. Note that the samples from the TBL distribution are guaranteed to be positive and bounded. Furthermore, generating such samples can be accomplished efficiently (see Appendix \ref{sampling_TBL}).

\begin{fact}\label{fact1} \cite[Lemma 1]{xiao2025one}
Let the scale parameter $\lambda = 1/\epsilon_0$, the range parameter $R = 2\mu$ and the mean parameter is at least
$\mu\ \ge\ 1+\frac{1}{\epsilon_0}\log\!\Big(\frac{1}{2\delta_0}\,(1-e^{-\mu \epsilon_0})\Big)$. Then, adding such $(\mu,\lambda,R)$ TBL noise to a $1$-sensitivity function provides $(\epsilon_0,\delta_0)$-DP.
\end{fact}
Stated differently, for a given pair $(\mu,\epsilon_0)$, the smallest achievable $\delta_0$ is given by
\begin{equation}
\label{eq:delta_0}
 \delta_0 \geq \tfrac12\,e^{-(\mu-1)\epsilon_0}\bigl(1-e^{-\mu\epsilon_0}\bigr).   
\end{equation}
It is evident that for a fixed $\epsilon_0$, larger values of $\mu$ allow smaller values of $\delta_0$ to be chosen.
Let $\tilde{Z} \geq 0$ be drawn from a $(\mu,\lambda,R)$ TBL distribution. We propose to privatize $f(\setG)$ via $\tilde{f}(\setG) = f(\setG) - \tilde{Z}$. If the parameters of the TBL distribution are chosen according to Fact \ref{fact1}, then $\tilde{f}(\cdot)$ is $(\epsilon_0,\delta_0)$-DP. 

\commentfalse{
\noindent {\bf $\bullet$ False negatives under TBLM:}
We remark that there are no false positives with TBLM, since $f(\mathcal{G}) \leq 0 \implies \tilde{f}(\mathcal{G}) \leq 0$. However, false negatives are possible, as adding negative noise can yield $\tilde{f}(\mathcal{G}) \leq 0$, even when $f(\mathcal{G}) \leq 0)$. While this does not cause the algorithm to break down, it can be an annoyance. The probability of a false negative is given by
\begin{equation}
  {\sf Pr} (\tilde{f}(\setG) \leq 0 | f(\setG) > 0) 
 =  {\sf Pr}[Z \geq {\sf Gap(\setG)}-t].
\end{equation}
Since $Z \in [0,R]$ and $R = 2\mu$, it is evident that no false negatives occur when ${\sf Gap(\setG)}\geq t+2\mu \Leftrightarrow \mu \leq ({\sf Gap(G)}-t)/2$. 
In Appendix~\ref{proof_false_negative}, we analyze the probability of false negatives as a function of $\mu$ and show that it decreases exponentially with $\textsf{Gap}(\mathcal{G}) - t$, until it reaches $0$ at $\mu = ({\sf Gap(G)}-t)/2$. 

It should be emphasized that we do not use graph gap information in our algorithm. We only claim that by adjusting $\mu$, one can assure that no false negatives arise.
}
\noindent {\bf $\bullet$ Analyzing the sensitivity of $\phi(\cdot)$:} After privatizing $f(\setG)$, we finally consider the optimization problem
\begin{equation}\label{eq:phi_main}
    \begin{aligned}
     \phi(\mathcal{G}) = \min ~ & d(\mathcal{G},\mathcal{G}') \\
        \textrm{s.t.} ~ & \theta(\mathcal{G}') \geq \beta\cdot u(\tilde{f}(\setG)),
    \end{aligned}
\end{equation}
In order to perform PTR with $\phi(\cdot)$, we need to compute its $\ell_1$-sensitivity under edge-DP. Let ${\sf GS}_{\phi}$ denote the global sensitivity of $\phi(\cdot)$. In our framework, it turns out that ${\sf GS}_{\phi}$ can be larger than $1$, unlike the modified PTR framework of \cite{li2024computing}. This is showcased by the following result.

\begin{lemma}\label{phi_sensitivity}
The $\ell_1$ global sensitivity of $\phi(\cdot)$ satisfies
\begin{equation}\label{eq:GS_phi}
{\sf GS}_{\phi} =
\begin{cases}
S_{1}:=2 + (2-\sqrt{2})\mu, & -1 < \tilde{f}(G) < 1, \\[4pt]
S_{2}:=1, & \text{otherwise}.
\end{cases}    
\end{equation}
\commentfalse{
Moreover, for every graph $G$ with gap $Gap(G)>2\mu+t+1$, we have $\tilde{f}(G) > 1$, which implies ${\sf GS}_{\phi}=1$.
}
\end{lemma}

\begin{proof}
Please refer to Appendix~\ref{proof_phi_sensitivity}.
\end{proof}
After having described the construction of $\phi(\cdot)$ and analyzing its sensitivity, we are ready to establish the privacy properties of the PTR algorithm. 

\begin{theorem}\label{ptr_dp}
For every graph $\setG$, the PTR algorithm is $(\epsilon_0+\epsilon_1 + \epsilon_2,\delta_0 + \delta)$ differentially-private, where $\delta_0$ is computed based on \eqref{eq:delta_0}.  
\end{theorem}
\begin{proof}
Please refer to Appendix \ref{proof_dp}.
\end{proof}

Taking a step back, its worthwhile to contrast our modified PTR algorithm with that of \cite{li2024computing}. Although we adopt the same idea as \cite{li2024computing}, the main difference is that we need to perform an extra differentially private check to test if the given dataset lies in the large graph regime or not. If the latter case arises, the algorithm will stop w.p. $1-\delta$. The second difference is that the sensitivity of $\phi(\cdot)$ depends on the parameter $\mu$ used in the TBL, which is a consequence of the fact that the proposed bound used in problem \eqref{eq:surrogate_combo} depends on the outcome of $u(\tilde{f}(\setG))$. These modifications necessitate a larger privacy budget compared to the PTR algorithm proposed in \cite{li2024computing}. Next, we turn to the practical considerations of selecting $\beta$ and evaluating $\phi$. 

\noindent {\bf $\bullet$ Valid choices of $\beta$:}  Selecting the parameter $\beta$ is a key component of implementing PTR. The following theorem specifies the range of $\beta$ values that are valid for the PTR algorithm.

\begin{theorem}\label{valid_beta}
For any choice of $\beta \in (\beta_l,\beta_u)$, the statistic $\phi(\cdot)$ can be computed according to \eqref{eq:phi}, where the lower and upper limits are given by
\begin{equation}
\small
    \beta_l := \frac{2\sqrt{v_{\pi(1)}^2+v_{\pi(2)}^2}}{\textsf{GAP}(\mathcal{G})}, 
    \beta_u:= \frac{2\sqrt{2}}{\textsf{GAP}(\mathcal{G})} 
    \biggl[ 2-\sqrt{2} + \sqrt{v_{\pi(1)}^2+v_{\pi(2)}^2} \biggr].
\end{equation}
\end{theorem}
\begin{proof}
Please refer to Appendix \ref{proof_valid_beta}.
\end{proof}

\noindent {\bf $\bullet$ Computing $\phi(\cdot)$:} Given a $\beta \in (\beta_l,\beta_u)$, 
the tractability of the PTR framework hinges on our ability to solve problem \ref{eq:P2} in an efficient manner in order to compute $\phi(\cdot)$. By construction, the problem asks to find the smallest value of $d(\mathcal{G},\mathcal{G}')$ such that the inequality 
\begin{equation}\label{eq:P3}
\small
    \frac{2}{\textsf{GAP}(\mathcal{G})-d(\mathcal{G},\mathcal{G}') } \cdot
    \biggl[\frac{2d(\mathcal{G},\mathcal{G}')}{\textsf{GAP}(\mathcal{G})}+ \sqrt{v_{{\pi(1)}}^2+v_{{\pi(2)}}^2}\biggr]
    \geq \beta
\end{equation}
is valid. Observe that for a given value of $\beta$, all the involved parameters can be readily computed from $\mathcal{G}$. Furthermore, the fact that $\theta(\mathcal{G}')$ (i.e., the LHS of the above inequality) is monotonically increasing with $d(\mathcal{G},\mathcal{G}')$, facilitates simple solution. In fact, for a given $\beta$, the statistic $\phi(\cdot)$ can be computed in {\em closed form} according to the formula

\begin{equation}\label{eq:phi}
\small
    \phi(\mathcal{G}) = 
    \left\lceil \frac{\beta \cdot \textsf{GAP}^2(\setG) - 2 \textsf{GAP}(
    \setG)\sqrt{v_{\pi(1)}^2 + v_{\pi(2)}^2 }}{4 + \beta \cdot \textsf{GAP}(\mathcal{G})} \right\rceil.
\end{equation}

\noindent {\bf $\bullet$ Policy for selecting $\beta:$} 
Finally, we need to have a policy for selecting  a suitable $\beta \in (\beta_l,\beta_u)$. 
From \eqref{eq:phi}, it is evident that $\phi(\setG)$ increases monotonically with $\beta \in (\beta_l,\beta_u)$. Hence, for a fixed set of privacy parameters $(\epsilon_1,\delta)$, larger values of $\beta$ correspond to higher likelihood of the noisy statistic $\hat{\phi}(\setG)$ exceeding the threshold $(GS_{\phi}\ln(1/\delta))/\epsilon_1$, which in turn increases the odds of the PTR algorithm yielding a response. However, larger values of $\beta$ also lead to increased noise injection in the release step of PTR. Hence, a suitable $\beta$ should strike the ``sweet spot'' between minimizing noise injection and maximizing the odds of a successful response. Next, we demonstrate how the choice of $\beta$ impacts these two conflicting objectives.


For a given $\beta$, the PTR algorithm {\em succeeds} if the value of the noisy statistic $\hat{\phi}(\setG)$ exceeds the threshold $(GS_{\phi}\ln(1/\delta))/\epsilon_1$, which implies that the algorithm responds with an output according to step 5. Hence, the {\em success probability} of PTR corresponds to the event ${\sf Prob}(\hat{\phi}(\setG) \geq (GS_{\phi}\log(1/\delta))/\epsilon_1)$, where the randomness is w.r.t. the Laplace random variable ${\sf Lap}(0,1/\epsilon_1)$. 
We now reveal how the success probability is affected by the choice of $\beta$. 
To this end, it will be convenient to express $\phi$ as $\phi(\setG) = \left \lceil \tau(\setG)\right\rceil$ where $\tau(\setG)$ is the fraction in \eqref{eq:phi}. Note that by design, we have $ \left \lfloor \tau(\setG) \right\rfloor + 1\geq \phi(\setG) \geq \tau(\setG).$ Since $\tau(.)$ increases monotonically with $\beta$, increasing $\tau(.)$ also increases $\phi(\cdot)$, which in turn, directly influences the outcome of the threshold test. 
Next, we demonstrate that controlling the value of $\tau(.)$ through $\beta$ is sufficient to derive lower bounds on the success probability of PTR.

\begin{theorem}\label{ptr_success}
Let $\phi(\setG) = \left \lceil \tau(\setG)\right\rceil$ where $\tau(\setG)$ is the fraction in \eqref{eq:phi}. If $\beta$ is selected to satisfy
\begin{equation}\label{eq:threshold}
\small
\tau(\setG) = (p+GS_{\phi})\cdot \frac{\log(1/\delta)}{\epsilon_1}, \forall \; p \in (0,1],
\end{equation}
then the success probability of PTR is at least $1-\frac{\delta^{p}}{2}$.
\end{theorem}
\begin{proof}
Please refer to Appendix \ref{proof_success}. 
\end{proof}
\noindent Intuitively, the smaller the value of $p$, the closer the value of $\phi(\cdot)$ is to the threshold $(GS_{\phi}\log(1/\delta))/\epsilon_1)$, and hence the lower the probability of success. This is captured by the above theorem, which shows that as $p$ decreases, the success probability diminishes at the rate $1-O(\delta^{(p-1)})$. As $p \rightarrow 1$, the lower bound on the success probability shrinks to $1/2$.

The value of $\beta$ which satisfies  \eqref{eq:threshold} is given by
\begin{equation}
\label{eq:ptr_beta}
\beta = \frac{2}{\textsf{GAP}(\setG)} \cdot 
\biggl[
\frac{2(p+GS_{\phi})\log(1/\delta)/\epsilon_1 
+ \textsf{GAP}(\setG)\sqrt{v_{\pi(1)}^2 + v_{\pi(2)}^2}}
{\textsf{GAP}(\setG) - (p+GS_{\phi})\log(1/\delta)/\epsilon_1}
\biggr]
\end{equation}

We can obtain the following insights from the above formula. First, fixing all parameters in \eqref{eq:ptr_beta} except $p$, we see that $\beta$ diminishes as $p$ is reduced. We conclude that smaller values of $p$ result in the injection of smaller amounts of noise, but this comes at the expense of reduced odds of the algorithm succeeding. Hence, the parameter $p$ directly controls the trade-off between noise-injection levels and the success probability. Second, fixing all parameters except the graph dependent quantities $\textsf{GAP}(\setG)$ and $\sqrt{v_{\pi(1)}^2 + v_{\pi(2)}^2}$, we observe that $\beta = O\biggl(\frac{\sqrt{v_{\pi(1)}^2 + v_{\pi(2)}^2}}{\textsf{GAP}(\setG)}\biggr)$, which implies that graphs with a large spectral gap and small spread in the energy of the entries of $\vv$ are amenable to smaller levels of noise injection. In Appendix \ref{expanders}, we show that the family of expander graphs \cite{hoory2006expander}, which mimic several properties of social networks \cite{malliaros2011expansion}, fulfill both conditions.

We conclude with a final sanity check to ensure that $\beta$ computed according to \eqref{eq:ptr_beta} lies in the interval $(\beta_l,\beta_u)$. 
\begin{proposition}\label{ptr_parameters}
For a fixed graph with spectral gap $\textsf{GAP}(\setG)$, and parameters $\epsilon_1,\delta,p$, the condition
$$\small \frac{\log(1/\delta)}{\epsilon_1} < (1-1/\sqrt{2})\frac{\textsf{GAP}(\setG) }{(p+GS_{\phi})} \implies \beta \in (\beta_l,\beta_u).  $$
\end{proposition}
\begin{proof}
Please refer to Appendix \ref{proof_parameters}.
\end{proof}
The final proposed solution to approximate D$k$S via PTR is presented in Algorithm ~\ref{alg:ptr}. Note that the steps of the algorithm are in closed-form, and hence it can be executed extremely quickly.

\begin{algorithm}[t!]
    \DontPrintSemicolon 
    \SetAlgoLined
    \footnotesize

    \textbf{Input:} Symmetric $A \in \mathbb{R}^{n \times n}$, $p$, $\mathbf{v}$, $\mu$,
    privacy parameters $\epsilon_0, \epsilon_1, \epsilon_2$, $\delta > 0$.\;
    \textbf{Output:} Private PC $\vv_{\text{ptr}}$.\;

    \vspace{0.5ex}
    Compute $\delta_0$ based on \eqref{eq:delta_0}.\;
    Compute $\beta$ based on \eqref{eq:ptr_beta}.\;
    \vspace{0.5ex}
    \tcp{\textbf{Phase I: Private gap test: check whether the graph is well-behaved}}

    Sample $z \sim \text{TBL}(\mu, \lambda=1, R=2\mu)$.\;
    Compute $\tilde{f}(G) = f(G) - z$.\;

    \uIf{$\tilde{f}(G) \geq 0$}{
        Compute $\phi(\mathcal{G})$ based on \eqref{eq:phi}.\;
    }
    \Else{
        $\phi(\mathcal{G}) = 0$.\;
    }
    \tcp{\textbf{Phase II: Private distance computation}}
    \uIf{$-1 < \tilde{f}(G) < 1$}{
        Set ${\sf GS}_{\phi} = 2 + (2-\sqrt{2})\mu$.\;
    }
    \Else{
        Set ${\sf GS}_{\phi} = 1$.\;
    }

    Compute $\hat{\phi}(\mathcal{G}) =
    \phi(\mathcal{G}) + \text{Lap}\!\left(\tfrac{{\sf GS}_{\phi}}{\epsilon_1}\right)$.\;

    \vspace{0.5ex}
    \tcp{\textbf{Phase III: Private release via output perturbation}}

    \uIf{$\hat{\phi}(\mathcal{G}) \geq
    \tfrac{{\sf GS}_{\phi} \ln(1/\delta)}{\epsilon_1}$}{
        Compute $\mathbf{v}_{\text{ptr}} =
        \mathbf{v} + \mathcal{N}\!\left(0,
        \tfrac{2\beta^2 \log(2/\delta)}{\epsilon_2^2} \cdot \mathbf{I}_n\right)$.\;
        \textbf{Return:}
        $\mathbf{v}_{\text{ptr}} =
        \tfrac{\mathbf{v}_{\text{ptr}}}{\|\mathbf{v}_{\text{ptr}}\|_2}$.\;
    }
    \Else{
        \textbf{Return:} \textsf{No Response}.\;
    }

    \caption{PPC via PTR $(\mathbf{A}, p, \epsilon_0, \epsilon_1, \epsilon_2, \delta)$}
    \label{alg:ptr}
\end{algorithm}

\section{The Private Power Method}

We consider the Private Power Method (PPM) of \cite{hardt2014noisy} as a baseline for computing private principal components under edge-DP. 
In essence, PPM is a ``noisy'' variant of the classic (non-private) power method which computes the principal component of a matrix in an iterative fashion. Executing each step entails performing a matrix-vector multiplication, followed by normalization. In \cite{hardt2014noisy}, this algorithm is made DP by adding Gaussian noise to each matrix-vector multiplication, followed by normalization. By calibrating the variance of the Gaussian noise added in each iteration to the $\ell_2$-sensitivity of matrix-vector multiplication, the final iterate of PPM (after a pre-determined number of iterations have been carried out) can be shown to satisfy $(\epsilon,\delta)$-DP. Hence, in contrast to PTR, PPM is not a one-shot noise addition scheme. Furthermore, its output set is a singleton - a noisy principal component satisfying $(\epsilon,\delta)$-DP, which can reflect a smaller privacy budget compared to PTR.
The privacy model considered in \cite{hardt2014noisy} defines a pair of matrices $\vA$ and $\vA'$ to be neighboring if they differ in one entry by at-most $1$. As per the authors of \cite{hardt2014noisy}, this is most meaningful when the entries of the data matrix $\vA$ lie in $[0,1]$. Note that when $\vA$ and $\vA'$ correspond to a pair of neighboring adjacency matrices, this naturally coincides with the notion of edge-DP for undirected graphs.

The complete algorithm is described in Algorithm~\ref{alg:algo1}. According to \cite[Theorem 1.3]{hardt2014noisy}, after 
$L=O\!\left(\tfrac{\lambda_1\log n}{\textsf{GAP}(\setG)}\right)$ iterations, the output $\vv_L$ of PPM satisfies $(\epsilon,\delta)$-DP and with probability at least $9/10$, it holds that 
\begin{equation}\label{eq:ppm_error_bound}
\|(\vI-\vv_{L}\vv_{L}^T)\vv\|_2 
  \leq O\!\left(\frac{\sigma \cdot \max_{\ell \in [L]}\|\vv_{L}\|_{\infty}
  \cdot \sqrt{n\log L}}{\textsf{GAP}(\setG)}\right).
\end{equation} 
In terms of computational complexity, each step of PPM requires computing a sparse matrix-vector multiplication, which incurs $O(m)$ complexity, followed by noise addition and re-normalization, which incurs an additional $O(n)$ time. Hence, each step has complexity order $O(n+m)$. After $L$ iterations, the overall complexity is $O((n+m)L)$. If $L$ is chosen according to \cite[Theorem 1.3]{hardt2014noisy}, this results in complexity of order $O((\lambda_1(n+m)\log n)/\textsf{GAP}(\setG))$.

\begin{algorithm}  
    \DontPrintSemicolon 
    \SetAlgoLined
    \footnotesize
    \textbf{Input:} Symmetric $A \in \mathbb{R}^{n \times n}$, no. of iterations $L$, privacy parameters $\epsilon$, $\delta > 0$.\;
    \textbf{Output:} Private PC $\vv_{\text{ppm}}$.\;
    \textbf{Initialize:} Let $\mathbf{v}_0$ be a random unit direction and set $\sigma=\epsilon^{-1}\sqrt{4L \log{(1/\delta)}}$.\;
    \For{\( \ell = 1 \) \textbf{to} \( L \)}{
        Generate \( \mathbf{g}_{\ell} \sim \mathcal{N}(0, \|\mathbf{v}_{\ell-1}\|_{\infty}^2 \sigma^2) \in \mathbb{R}^{n} \)\;
        Compute \( \mathbf{w}_{\ell} = A \mathbf{v}_{\ell-1} + \mathbf{g}_{\ell} \)\;
        Normalize $\vv_\ell = \vw_\ell/\|\vw_\ell\|_2$
    }
    \textbf{Return:} $\vv_{\text{ppm}} = \vv_L$
    \caption{PPC via PPM $(\mathbf{A},L,\epsilon,\delta)$}
    \label{alg:algo1}
\end{algorithm}
\vspace{+\baselineskip} 

\section{Applications}
Let $\vv_{\textsf{priv}}$ denote a \((\epsilon, \delta)\) edge-DP estimate of $\vv$.
In this section, we explain how $\vv_{\textsf{priv}}$  can be utilized for the following applications. 

\noindent $\bullet$ {\bf (A1) Private top-$k$ eigenscore subset extraction:} In the non-private scenario, computing the subset with the $k$ largest eigenscores is equivalent to solving the problem 
\begin{equation}\label{eq:topk}
\small
   {\vx}_k = \arg
    \underset{\vx \in \setX_k}{\max}  \vv^T \vx, 
\end{equation}
since $\vv$ is element-wise non-negative. The solution is given by the support of the $k$-largest entries of $\vv$, and can be accomplished in $O(n\log k)$ time.
In the private setting, $\vv_{\textsf{priv}}$ need not be element-wise non-negative in general. Hence, in this case, we solve the problem
\begin{equation}\label{eq:noisy_dks}
\small
    \hat{\vx}_k = \arg
    \underset{\vx \in \setX_k}{\max} | \vv_{\textsf{priv}}^T \vx |
\end{equation}

Note that the post-processing property of DP implies that $\hat{\vx}_k$ also obeys \((\epsilon, \delta)\)-DP. In order to solve the above problem,
define the pair of candidate solutions
\begin{equation}
\small
    \vx^{(1)}_k = \arg \underset{\vx \in \setX_k}{\max} \vv_{\textsf{priv}}^T \vx; \quad
    \vx^{(2)}_k = \arg \underset{\vx \in \setX_k}{\min} \vv_{\textsf{priv}}^T \vx 
\end{equation}
Note that $\vx^{(1)}_k$ and $\vx^{(2)}_k$ correspond to the support of the $k$-largest and $k$-smallest entries of $\vv_{\textsf{priv}}$ respectively. Between these two candidates, the one which attains the larger objective value corresponds to the solution of \eqref{eq:noisy_dks}; i.e., we have
\begin{equation}
\small
    \hat{\vx}_k = \arg \underset{i \in \{1,2\}}{\max} |\vv_{\textsf{priv}}^T \vx_k^{(i)}|.
\end{equation}
Thus, the post-processing step remains computationally efficient, as it only requires examining an additional candidate solution compared to its non-private counterpart \ref{eq:topk}. 

\noindent $\bullet$ {\bf (A2) Private D$k$S:} 
Problem \eqref{eq:dks} is NP-hard and difficult to approximate \cite{manurangsi2017almost,jones2023sum}. As a result, we resort to the low-rank approximation scheme of \cite{papailiopoulos2014finding}, which uses the principal component $\vv$ to approximate D$k$S.
The best rank-one approximation of the adjacency matrix \( \mathbf{A} \) is denoted as \(\hat{\mathbf{A}} := \lambda_1\mathbf{v}\mathbf{v}^T\). 
Applying the rank-one approximation of \(\mathbf{A}\) to the objective function of problem \eqref{eq:dks} is equivalent to solving the problem ${\vx}_k = \max_{\vx \in \setX_k} \vv^T\vx$, which is the same as problem \ref{eq:topk}, and be solved in $O(n\log k)$ time.
The work of \cite{papailiopoulos2014finding} showed that such a subset of nodes constitute dense subgraphs in real-datasets, and provide a data-dependent approximation guarantee for D$k$S (see Appendix \ref{appendix_upper_bound}). 

In the private setting, we now solve the following rank-$1$ approximation problem
\begin{equation}
\small
    \hat{\vx}_k = \arg
    \underset{\vx \in \setX_k}{\max} | \vv_{\textsf{priv}}^T \vx |
\end{equation}
to obtain the final output $\hat{\vx}_k$.
Since $\vv_{\textsf{priv}}$ may not be element-wise non-negative, as argued previously, it suffices to examine the top-$k$ and bottom-$k$ support of $\vv_{\textsf{priv}}$, and then output the candidate that achieves a larger objective value.

\section{Experimental Results}

\begin{table}
  \caption{Summary of network statistics: the number of vertices ($n$), the number of edges ($m$), the eigen-gap, and the network type.}
  \label{tab:stats}
  \centering
  \begin{tabular}{lcccc}
    \toprule
    Graph & $n$ & $m$ & Eigen-gap & Network Type \\
    \midrule
    \textsc{Facebook} & 4K & 88K & 36.9 & Social \\
    \textsc{PPI-Human} & 21K & 342K & 70.8 & Biological \\
    \textsc{soc-BlogCatalog} & 89K & 2.1M & 335.9 & Social \\
    \textsc{Flickr} & 106K & 2.32M & 101.3 & Image relationship \\
    \textsc{Twitch-Gamers} & 168K & 6.8M & 148.4 & Social \\
    \textsc{Orkut} & 3.07M & 117.19M & 225.4 & Social \\
    \bottomrule
  \end{tabular}
\end{table}

Here, we test the efficacy of the proposed DP algorithms on real graph datasets in terms of their privacy-utility trade-off for applications {\bf(A1)} and {\bf(A2)} and their runtime. 

\subsection{Setup}

\noindent $\bullet$ {\bf Datasets:}
Table ~\ref{tab:stats} provides a summary of the datasets, which were sourced from standard repositories ~\cite{kunegis2013konect,jure2014snap}. All experiments were performed in Python on a Linux work-station with 132GB RAM and an Intel i7 processor. We used the following settings. 

\noindent $(1)$ {\bf PTR:} For the TBLM, we set $(\epsilon_0,\delta_0)=(1, 7e^{-7})$ and $\mu=3t \approx 14.48$.  We also set the other privacy budget parameters as $(\epsilon_1=\epsilon_2=\epsilon = 3)$, 
while we set $\delta = \log(m)/m$, where $m$ is the number of edges in each graph. We set the lower bound on the success probability of obtaining a response in Algorithm \ref{alg:ptr} to be $0.95$, and then, based on Theorem \ref{ptr_success}, set $p = 1-(1/\log_{10}(\delta))$. 

{\bf Handling No responses:} A potential issue with PTR mechanisms is the frequency of returning \textsc{No Response}. In our framework, this can happen due to (a): the dataset not clearing the private gap test in phase I or (b): the dataset clearing the gap test but failing the threshold test for release in Phase III. For the latter case, Theorem 5 provides an explicit lower bound on the success probability, which allows end-users to directly control the probability of successful private release via specification of the parameter $p$. In all experiments, the datasets we used successfully cleared the gap test, and we selected $p$ so that the probability of release is at least 0.95. We empirically observed that PTR returns a valid output in the vast majority of trial runs across all datasets (see Table \ref{tab:success}). For general datasets, if a ``no response'' is obtained, one could repeat the PTR mechanism (since it is very fast) or
increase the lower bound on the success probability to improve the odds of a response. However, this also increases $\beta$ and results in addition of greater levels of noise to the principal component. In the event that the mechanism repeatedly returns ``no response'', then one could fall back to using PPM.

\noindent $(2)$ {\bf PPM:} We set $(\epsilon = 3)$ and the parameter $\delta$ was fixed to be $(\delta = \log(m)/m)$ across all datasets. 
An open question with practical implementation of PPM is how to select the number of iterations $L$, which has to be specified beforehand. The main issue is that \cite[Theorem 1.3]{hardt2014noisy} only provides the order of iterations required to achieve a certain utility bound, which scales like $O(\frac{\lambda_1\log n}{\textsf{GAP}(\setG)}))$. We experimentally observed that simply setting $L = \frac{\lambda_1\log n}{\textsf{GAP}(\setG)}$ provides good empirical performance across the different datasets. Performing more iterations typically degraded the performance of the algorithm, in addition to also increasing the run-time complexity. 

Further results for a more detailed investigation of the proposed solutions across different values of the privacy parameters can be found in Appendix \ref{additional_result}. 

\begin{table}
  \caption{Execution time to privatize the PC.}
  \label{tab:Scalability}
  \centering 
  \begin{tabular}{ccccc} 
    \toprule
    Graph  & PPM (ms)& PTR (ms)& Speedup \\
    \midrule
    \textsc{Facebook}  & 13.6  & 0.07 & 194\\ 
    \textsc{PPI-Human}  & 58.6  & 0.32 & 183 \\
    \textsc{soc-BlogCatalog}  & 283  & 1.64 & 172\\
    \textsc{Flickr}  & 587 & 1.83 & 320\\ 
    \textsc{Twitch-Gamers}  & 7781 & 2.25 & 3458\\     
    \textsc{Orkut}  & 29660 & 43.14 & 688\\        
    \bottomrule
  \end{tabular}
\end{table}

\begin{table}[h]
\centering
\begin{tabular}{|c|c|c|c|c|c|}
\toprule
\textsc{Facebook} & \textsc{PPI-Human} & \textsc{soc-BlogCatalog} & \textsc{Flickr} & \textsc{Twitch-Gamers} & \textsc{Orkut}\\ 
\midrule
$98\%$ & $99.6\%$ & $98.6\%$ & $96.9\%$ & $99.6\%$ & $98.5\%$ \\ \bottomrule
\end{tabular}
\caption{Empirical success rates of the PTR algorithm across different datasets with \((\epsilon_0,\delta_0)=(1,7\times10^{-7}),\epsilon_1 = \epsilon_2 = 3,\delta = \log(m)/m\), \((m)\): the number of edges.}
\label{tab:success}
\end{table}

\subsection{Results}

\noindent $\bullet$ {\bf Timing:} As shown in Table \ref{tab:Scalability}, PTR significantly outperforms PPM in terms of execution time required to output a private PC. We obtain at least a $170$-fold speedup across all datasets, and on the {\sc Twitch-Gamers} dataset we obtain a $3500$-fold speedup. This is because the steps of the PTR algorithm can be computed in closed form, and it adds noise once, which makes it fast. On the other hand, PPM incurs $O(n+m)$ complexity with every iteration, which makes the overall algorithm slower. These results underscore the practical benefits of our PTR algorithm. Given that the original PTR algorithm \cite{dwork2009differential} is not known to be polynomial-time, our modifications and insightful parameter selection come together to result in a fast algorithm.

\noindent $\bullet$ {\bf (A1) Private Top-$k$ eigenscore subset extraction:} 
In order to measure utility, for each private subset extracted via Algorithms \ref{alg:ptr} and \ref{alg:algo1}, we compute its Jaccard similarity with the non-private subset. The results for the 4 datasets with the largest eigen-gaps are depicted in Figure \ref{fig:similarity}. Each sub-figure plots the Jaccard similarity versus subset size $k$ for both private algorithms on a specific dataset (averaged across $200$ Monte-Carlo trials). 
For these datasets, the two algorithms offer comparable utility under a modest privacy budget - in fact, the private top-$k$ subsets exhibit high similarity with their non-private counterpart. 
PTR incurs a larger overall privacy budget compared to PPM (roughly twice more in terms of the $\epsilon$ parameter) to attain the same utility. This can be attributed to the fact that PTR outputs $3$ noisy parameters privately, whereas PPM only outputs only one. As explained before, from a utility-time complexity perspective, PTR performs better in general.

\begin{figure}[!t]
    \centering
    \includegraphics[width=0.23\textwidth]{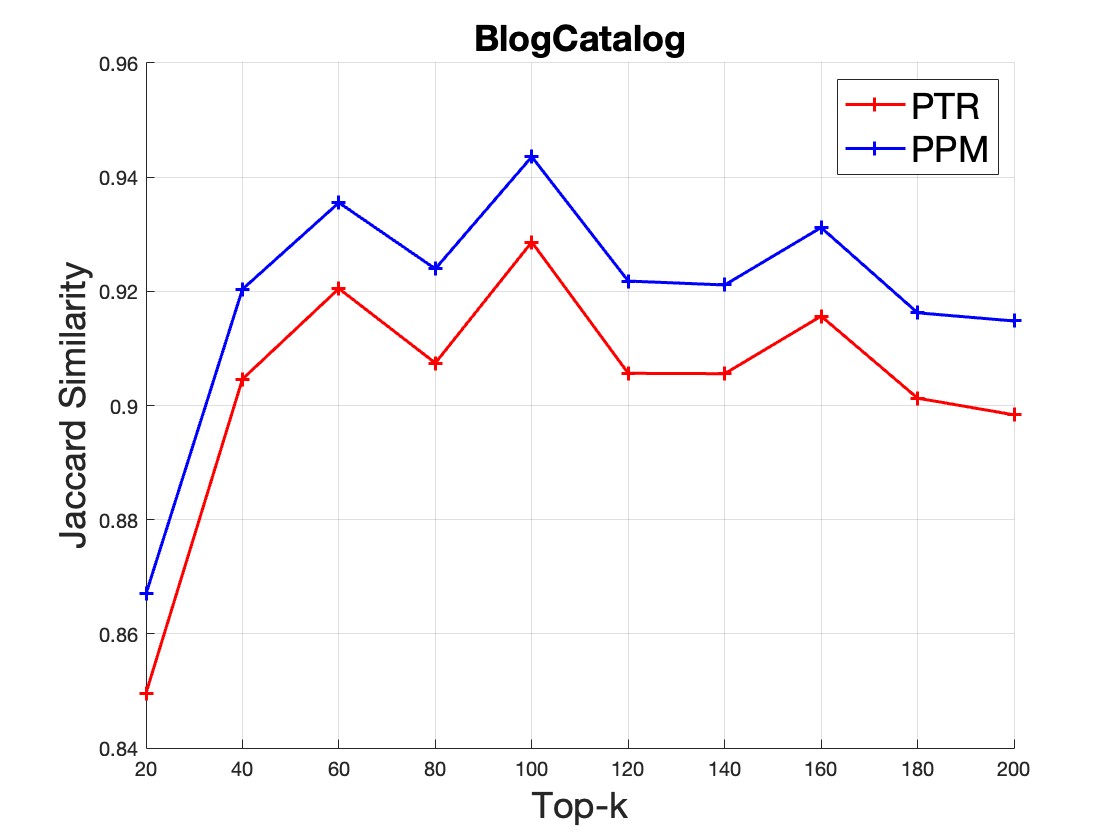}
    \includegraphics[width=0.23\textwidth]{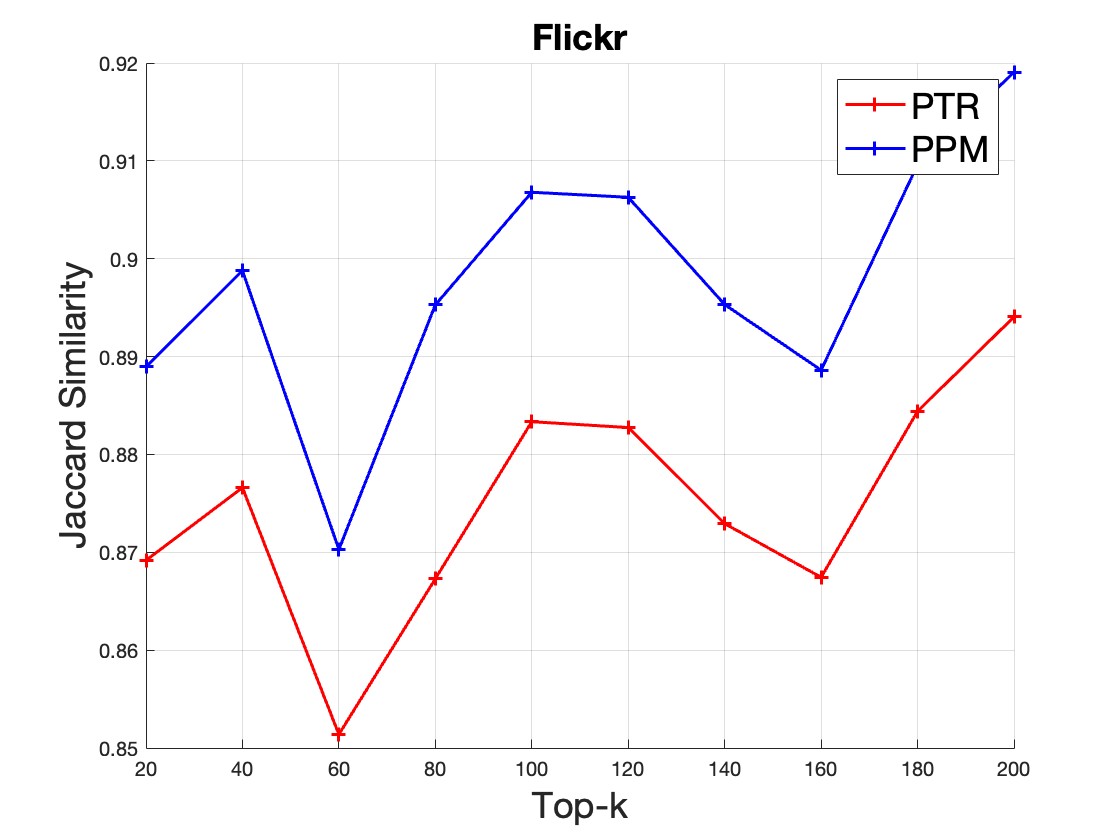}
    \includegraphics[width=0.23\textwidth]{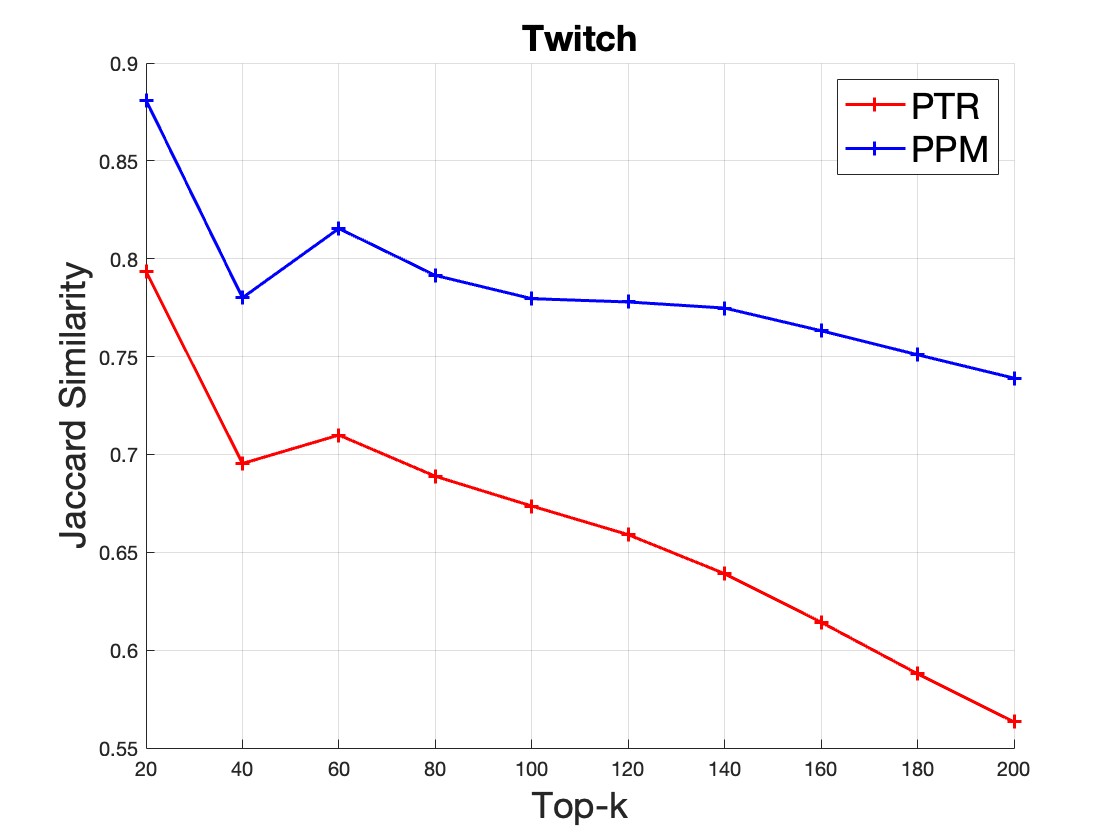}
    \includegraphics[width=0.23\textwidth]{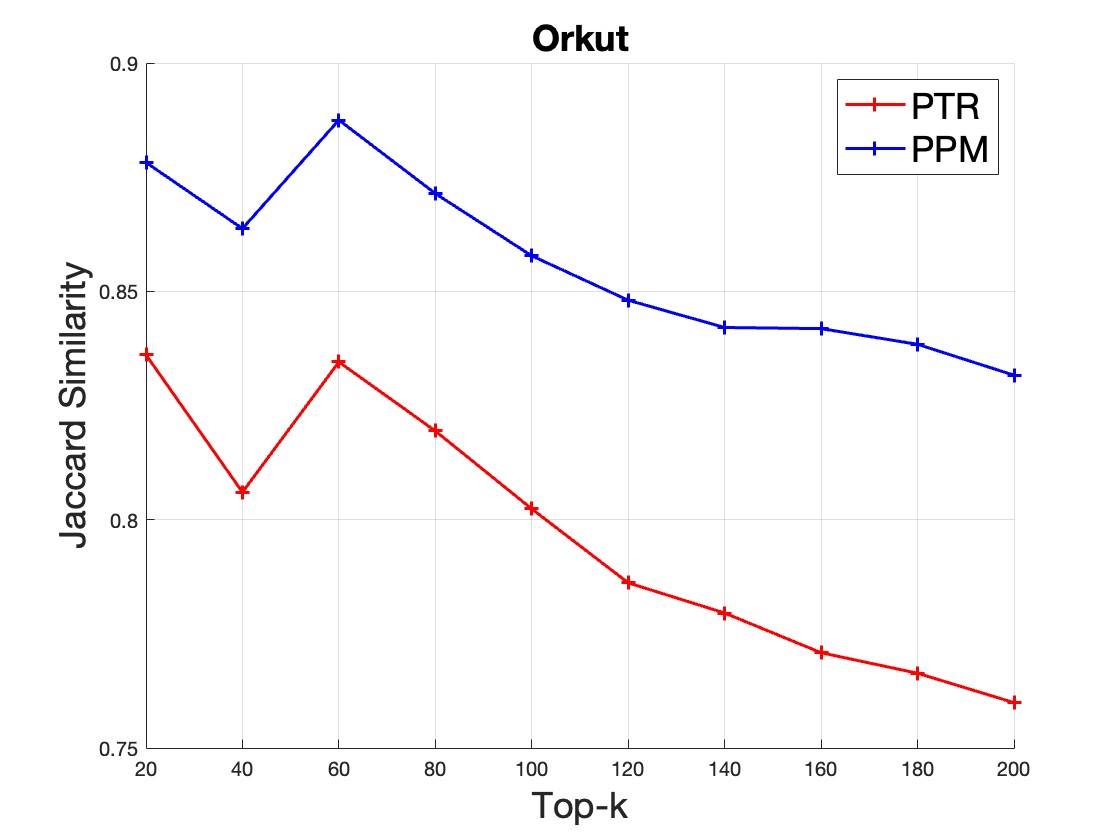}
    \caption{Results for private top-$k$ eigenscore subset detection. Jaccard similarity (y-axis) vs. subset size \(k\) (x-axis): PTR (red) \((\epsilon_0,\delta_0)=(1,7\times10^{-7}),\epsilon_1 = \epsilon_2 = 3,\delta_1 = \log(m)/m\), PPM (blue) $\epsilon = 3, \delta = \log(m)/m $, non-private (yellow), \((m)\): the number of edges.}
    \label{fig:similarity}
\end{figure}

\begin{figure}[!t]
    \centering
    \includegraphics[width=0.30\textwidth]{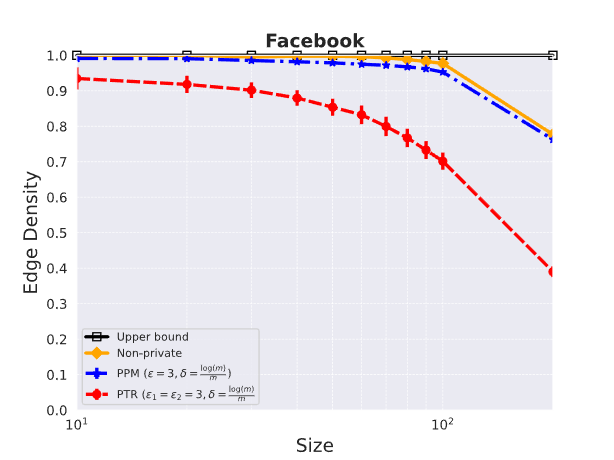}
    \includegraphics[width=0.30\textwidth]{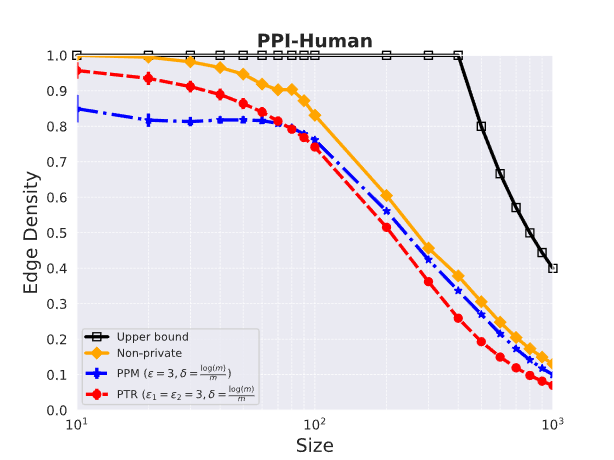}
    \includegraphics[width=0.30\textwidth]{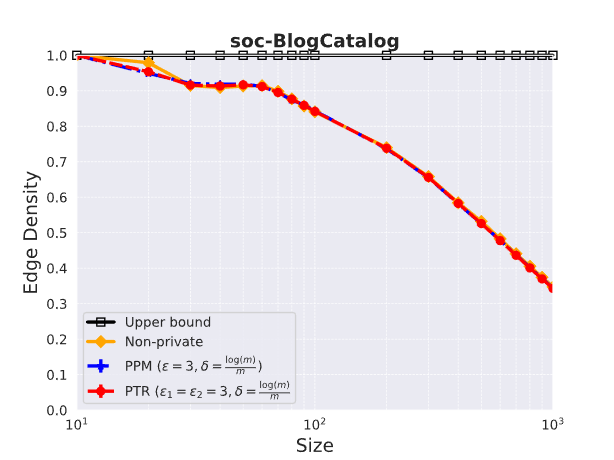}
    \includegraphics[width=0.30\textwidth]{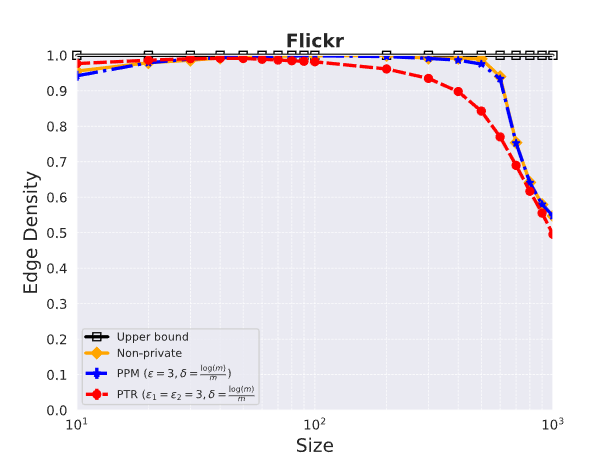}
    \includegraphics[width=0.30\textwidth]{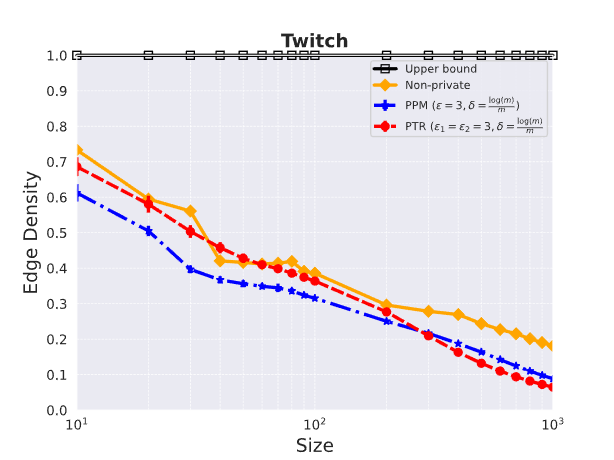}
    \includegraphics[width=0.30\textwidth]{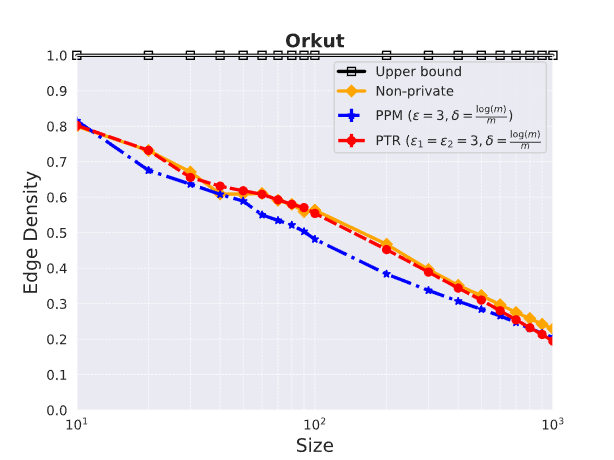}

    \caption{Results for private D$k$S. Edge density (y-axis) vs. subgraph size \(k\) (x-axis): PTR (red) \((\epsilon_0,\delta_0)=(1,7\times10^{-7})\), PPM (blue), non-private (yellow), and the upper bound on the maximum attainable edge density (black). \((m)\): the number of edges. Higher densities are better.}
    \label{fig:density}
\end{figure}

\noindent $\bullet$ {\bf (A2) Private D$k$S:} 
The empirical performance of Algorithms \ref{alg:ptr} and \ref{alg:algo1} across different datasets is depicted in Figure~\ref{fig:density}.  Each sub-figure plots the edge-density versus size curve obtained using the two different algorithms. As a baseline, we consider the non-private algorithm as described in \cite{papailiopoulos2014finding}. In addition, we also include an upper bound for the edge density of the D$k$S (See appendix \ref{appendix_upper_bound}).
To generate each plot, we executed Algorithm~\ref{alg:ptr} and and Algorithm~\ref{alg:algo1} for 100 different realizations, with each private PC used to generate a different edge density-size curve.
For each subgraph size $k$, we depict the average edge density attained by each method across all realizations, within one standard deviation (vertical lines). 
From Figure \ref{fig:density}, it is evident that both PTR and PPM output subgraphs whose edge-density closely matches that of the non-private solution, again for a modest privacy budget of $\epsilon \geq 3$. Again, PTR is faster in extracting highly-quality private dense subsets, at the expense of a larger privacy budget.

Based on our findings, we conclude that both PTR and PPM offer good utility in the applications considered compared to the non-private baselines. 
The advantage of PTR is its transparent parameter selection and its computational efficiency, allowing it to scale to large networks much more easily, at the cost of a slightly higher privacy budget.

\section{Conclusions}
In this paper, we considered the problem of privately computing the principal component of the graph adjacency matrix under edge-DP. For this task, we employed the technique of output perturbation. Motivated by the large gap between the local and global sensitivity on real-world datasets, we employed the Propose-Test-Release (PTR) framework. Owing to its instance specific nature, PTR can offer good utility on well-behaved datasets by injecting small amounts of noise to provide DP. However, it is computationally expensive. To overcome this challenge, we develop a new practical and powerful PTR framework which obviates the computational complexity issues inherent in the standard framework, while facilitating simple selection of the algorithm parameters. As a consequence, our PTR algorithm also results in the first DP algorithm for the densest-$k$-subgraph (D$k$S) problem, a key graph mining primitive. We test our approach on real-world graphs, and demonstrate that it can attain performance comparable to the non-private solution while adhering to a modest privacy budget. Compared to an iterative baseline based on the private power method (PPM), PTR requires a slightly larger privacy budget, but is more than two orders of magnitude faster on average.
DP techniques need to be scalable to larger datasets, for privacy practices and guarantees to become more commonplace.
Our paper makes advances towards this goal.

\bibliography{main}
\bibliographystyle{tmlr}

\appendix


\section{Supporting Lemmata}

\subsection{The Davis-Kahan-sin$\Theta$ Theorem}

Consider the $n \times n$ symmetric matrix $\vM^*$ with eigen-decomposition $\vM^* = \sum_{i=1}^n \lambda_i^*\vu_i^*\vu_i^{*T}$, where $|\lambda^*_1| \geq |\lambda^*_2| \geq \cdots \geq |\lambda^*_n|$ denote the eigen-values of $\vM^*$ sorted in descending order and $\{\vu_i^*\}_{i \in [n]}$ are the corresponding eigen-vectors. Let $\vM$ be a $n \times n$ symmetric matrix obtained by perturbing $\vM$; i.e., we have
\begin{equation}
    \vM = \vM^* + \vE.
\end{equation}
In an analogous manner, the eigen-decomposition of $\vM$ is defined as $\vM = \sum_{i=1}^n \lambda_i\vu_i\vu_i^{T}$. 
Let $\vU = [\vu_1,\cdots,\vu_r]$ and $\vU^* = [\vu_1^*,\cdots,\vu_r^*]$ denote the principal-$r$ eigen-spaces associated with $\vM$ and $\vM^*$ respectively. The distance between the principal eigen-spaces $\vU$ and $\vU^*$ can be defined as
$$\textsf{dist}(\vU,\vU^*):= \min_{\vQ \in \mathcal{O}^{r \times r}} \|\vU\vQ - \vU^* \|_2, $$ where $\mathcal{O}^{r \times r}$ denotes the set of $r \times r$ rotation matrices.
The classic result of Davis-Kahan \cite{davis1970rotation} provides eigen-space perturbation bounds in terms of the strength of the perturbation $\vE$ and the eigen-gap of $\vM^*$. While there are many variations of this result \cite{stewart1990matrix}, we utilize one variant which will prove particularly useful in our context \cite[Corollary 2.8]{chen2021spectral}.

\begin{theorem}\label{theo:davisK}
If the perturbation satisfies $\|\vE\|_2 \leq (1-1/\sqrt{2})(|\lambda^*_r|-|\lambda^*_{r+1}|)$, then the distance between the principal eigen-spaces $\vU$ and $\vU^*$ obeys
\begin{equation}
    \textsf{dist}(\vU,\vU^*) \leq \frac{2\|\vE\vU^*\|_2}{|\lambda^*_r|-|\lambda^*_{r+1}|}.
\end{equation}
\end{theorem}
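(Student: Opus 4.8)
This is the classical invariant‑subspace perturbation bound, and the plan is to prove it by the Sylvester‑equation (resolvent) method, keeping careful track of constants so that the final factor comes out to exactly $2$. Write $g := |\lambda^*_r| - |\lambda^*_{r+1}|$ and $\eta := \|\vE\|_2 \le (1 - 1/\sqrt{2})\,g$, let $P := \vU\vU^T$ and $P_\perp := \vI - P$ be the spectral projections of $\vM = \vM^* + \vE$ onto its top‑$r$ eigenspace and its orthogonal complement, let $\vU_\perp$ be an orthonormal basis of $\mathrm{range}(P_\perp)$, and set $\Lambda^* := \mathrm{diag}(\lambda^*_1,\dots,\lambda^*_r)$ and $\Lambda_\perp := \vU_\perp^T\vM\vU_\perp$. \emph{Step 1 (separation).} First I would invoke Weyl's inequality: the eigenvalues of $\vM$ lie within $\eta$ of those of $\vM^*$, and since $\eta < g/2$ the top‑$r$ eigenvalues of $\vM$ (by magnitude) stay separated from the bottom $n-r$, so $\vU$ is well defined; moreover every eigenvalue $\mu$ appearing in $\Lambda_\perp$ satisfies $|\mu| \le |\lambda^*_{r+1}| + \eta$, while each $\lambda^*_j$ with $j\le r$ satisfies $|\lambda^*_j| \ge |\lambda^*_r|$, whence $|\mu - \lambda^*_j| \ge |\lambda^*_j| - |\mu| \ge g - \eta$ (and likewise $|\lambda^*_j - \lambda^*_{r+i}| \ge |\lambda^*_j| - |\lambda^*_{r+i}| \ge g$ for the unperturbed pairs). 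A small amount of care is needed because the magnitude‑ordering itself is being perturbed; this is the routine part handled in \cite{chen2021spectral}.

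\emph{Step 2 (Sylvester equation for the angle).} Since $P$ is a spectral projection, $\vM P = P\vM$, hence $\vM P_\perp = P_\perp\vM$. From $\vM^*\vU^* = \vU^*\Lambda^*$ we obtain $\vM\vU^* = \vU^*\Lambda^* + \vE\vU^*$; applying $P_\perp$ and using the commutation gives $\vM(P_\perp\vU^*) - (P_\perp\vU^*)\Lambda^* = P_\perp\vE\vU^*$. Writing $P_\perp\vU^* = \vU_\perp\mathbf{Y}$, so that $\mathbf{Y} = \vU_\perp^T\vU^*$ and $\|\mathbf{Y}\|_2 = \|P_\perp\vU^*\|_2 = \|\sin\Theta(\vU,\vU^*)\|_2$, substituting $\vM\vU_\perp = \vU_\perp\Lambda_\perp$ and left‑multiplying by $\vU_\perp^T$ yields the Sylvester equation $\Lambda_\perp\mathbf{Y} - \mathbf{Y}\Lambda^* = \vU_\perp^T\vE\vU^*$. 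By Step 1 the spectra of $\Lambda_\perp$ and $\Lambda^*$ are separated by at least $g - \eta$, so the standard Sylvester bound gives $\|\mathbf{Y}\|_2 \le \|\vU_\perp^T\vE\vU^*\|_2/(g - \eta) \le \|\vE\vU^*\|_2/(g - \eta)$; and since $\eta \le (1 - 1/\sqrt{2})g$ forces $g - \eta \ge g/\sqrt{2}$, this is at most $\sqrt{2}\,\|\vE\vU^*\|_2/g$. The point of projecting through $\vU^*$, rather than the naive route $\vM\vU = \vU\Lambda$, is precisely that it keeps the right‑hand side at $\vE\vU^*$ and avoids a lossy extra $\eta$ in the denominator.

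\emph{Step 3 (from angles to $\textsf{dist}$).} Finally I would relate $\textsf{dist}(\vU,\vU^*) = \min_{\vQ \in \mathcal{O}^{r\times r}}\|\vU\vQ - \vU^*\|_2$ to $\|\sin\Theta\|_2$: taking the SVD $\vU^{*T}\vU = \mathbf{W}_1\Sigma\mathbf{W}_2^T$ with $\Sigma = \mathrm{diag}(\cos\theta_i)$ and choosing $\vQ = \mathbf{W}_2\mathbf{W}_1^T$, a short computation gives $\|\vU\vQ - \vU^*\|_2^2 = 2\max_i(1 - \cos\theta_i) \le 2\max_i\sin^2\theta_i = 2\|\sin\Theta\|_2^2$, using $1 - \cos\theta_i = \sin^2\theta_i/(1+\cos\theta_i) \le \sin^2\theta_i$. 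Hence $\textsf{dist}(\vU,\vU^*) \le \sqrt{2}\,\|\sin\Theta\|_2$, and combining with Step 2 gives $\textsf{dist}(\vU,\vU^*) \le \sqrt{2}\cdot\sqrt{2}\,\|\vE\vU^*\|_2/g = 2\|\vE\vU^*\|_2/g$, the claimed bound.

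\emph{Main obstacle.} The structural steps are routine; the delicate part is the constant bookkeeping, where the two $\sqrt{2}$ factors — one from $g - \eta \ge g/\sqrt{2}$ under the hypothesis $\|\vE\|_2 \le (1-1/\sqrt{2})g$, one from the $\sin\Theta$‑to‑$\textsf{dist}$ conversion — must multiply to exactly $2$. This only goes through with the \emph{sharp} Sylvester estimate (right‑hand side $\vU_\perp^T\vE\vU^*$, separation $g - \eta$); a cruder derivation, e.g. projecting $\vM\vU = \vU\Lambda$, yields separation only $g - 2\eta$ and degrades the constant (to $2+\sqrt{2}$). The other point needing care is justifying in Step 1 that the magnitude gap $|\lambda^*_r| - |\lambda^*_{r+1}|$ genuinely lower‑bounds the relevant signed eigenvalue separations after perturbation.
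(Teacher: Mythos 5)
The paper offers no proof of this result: it is quoted (as an appendix lemma) directly from \cite[Corollary~2.8]{chen2021spectral} and treated as a known fact, so there is no in-paper argument to compare your proposal against. Taken on its own terms, your reconstruction is correct and follows the standard Sylvester-equation route with the constant bookkeeping done properly: projecting $\vM\vU^*=\vU^*\Lambda^*+\vE\vU^*$ through $P_\perp$ keeps the forcing term at $\vU_\perp^T\vE\vU^*$, so the Sylvester separation is $g-\eta\ge g/\sqrt{2}$, and combined with $\textsf{dist}(\vU,\vU^*)\le\sqrt{2}\,\|\sin\Theta\|_2$ this yields the advertised factor of $2$.

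The one place a self-contained write-up should spell things out is precisely the step you defer as routine: that each eigenvalue $\mu$ of $\vM$ in the non-principal block satisfies $|\mu|\le|\lambda^*_{r+1}|+\eta$, so that the Sylvester separation really is $g-\eta$. Weyl's inequality pairs eigenvalues in \emph{value} order, whereas $\vU$ and $\vU^*$ are defined by \emph{magnitude} order, and the two orderings can disagree when $\vM^*$ is indefinite. The argument does go through: let $s_+$ (resp.\ $s_-$) be the number of positive (resp.\ negative) eigenvalues of $\vM^*$ with magnitude at least $|\lambda^*_r|$, so $s_++s_-=r$; Weyl in value order shows the top $s_+$ and bottom $s_-$ eigenvalues of $\vM$ by value all have magnitude strictly above $|\lambda^*_{r+1}|+\eta$ (using $g>2\eta$, which follows from $\eta\le(1-1/\sqrt{2})g<g/2$), while the remaining $n-r$ eigenvalues of $\vM$ land in $[-|\lambda^*_{r+1}|-\eta,\ |\lambda^*_{r+1}|+\eta]$. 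Hence the magnitude-top-$r$ block of $\vM$ is well defined and coincides with those $s_++s_-$ value-extreme eigenvalues, and its complement has the asserted magnitude bound. Since the hypothesis $\|\vE\|_2\le(1-1/\sqrt{2})g$ exists precisely so that this ordering step and the Sylvester estimate hold simultaneously, it is worth proving rather than citing in a standalone argument.
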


\noindent In particular, when $r=1$, we obtain the following bound on the distance between the principal components of $\vM$ and $\vM^*$ as a corollary.
\begin{corollary}\label{corr:dist_prin}
Under the conditions of \autoref{theo:davisK}, the distance between the principal components $\vu_1$ and $\vu_1^*$ is bounded by 
\begin{equation}
    \textsf{dist}(\vu_1,\vu_1^*) \leq \frac{2\|\vE\vu_1^*\|_2}{|\lambda^*_1|-|\lambda^*_{2}|}
\end{equation}
\end{corollary}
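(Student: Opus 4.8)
The plan is to obtain Corollary~\ref{corr:dist_prin} as the immediate specialization of Theorem~\ref{theo:davisK} to the rank-one case $r = 1$. First I would observe that when $r = 1$, the principal-$r$ eigen-spaces $\vU$ and $\vU^*$ collapse to the single column vectors $\vu_1$ and $\vu_1^*$ respectively, so that the quantity $\textsf{dist}(\vU,\vU^*)$ defined via $\min_{\vQ \in \mathcal{O}^{r \times r}} \|\vU\vQ - \vU^*\|_2$ coincides with $\textsf{dist}(\vu_1,\vu_1^*)$ as written in the statement. Likewise, the relevant eigen-gap $|\lambda_r^*| - |\lambda_{r+1}^*|$ becomes $|\lambda_1^*| - |\lambda_2^*|$, and the perturbation hypothesis $\|\vE\|_2 \leq (1 - 1/\sqrt{2})(|\lambda_1^*| - |\lambda_2^*|)$ is precisely the condition referred to by ``under the conditions of \autoref{theo:davisK}.''

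With these identifications in place, \autoref{theo:davisK} applied with $r=1$ delivers $\textsf{dist}(\vu_1,\vu_1^*) \leq 2\|\vE\vU^*\|_2/(|\lambda_1^*|-|\lambda_2^*|)$ directly, and substituting $\vU^* = \vu_1^*$ gives the claimed inequality. The only point worth spelling out is notational: $\vE\vu_1^*$ is an $n \times 1$ matrix, i.e.\ a column vector, so its spectral norm $\|\vE\vu_1^*\|_2$ is just its Euclidean norm, and no distinction between the two need be made. I would also remark that the $1\times 1$ orthogonal factor $\vQ$ in the definition of $\textsf{dist}$ ranges over $\{\pm 1\}$, so the bound automatically accommodates the sign ambiguity inherent in the principal eigen-vector.

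There is essentially no technical obstacle here: the corollary is a direct instantiation of the already-stated Davis--Kahan variant and introduces no new estimates. The only care required is to confirm that the two statements order eigen-values consistently (by magnitude, descending), which holds by construction, so that choosing $r = 1$ genuinely isolates the leading eigen-pair $(\lambda_1^*,\vu_1^*)$ and its perturbation $(\lambda_1,\vu_1)$.
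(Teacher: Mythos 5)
Your proposal is correct and matches the paper's own treatment: Corollary~\ref{corr:dist_prin} is obtained exactly by specializing \autoref{theo:davisK} to $r=1$, with the eigen-gap becoming $|\lambda_1^*|-|\lambda_2^*|$ and $\vU^*=\vu_1^*$. The remarks about the rotation factor reducing to $\{\pm 1\}$ and the spectral norm of a column vector coinciding with its Euclidean norm are accurate but add nothing beyond what the paper leaves implicit.
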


\section{Proof of \autoref{theo:LS}} \label{appendix:local_sens}

Consider a pair of neighboring graphs $\setG,\setG'$ and let $\vA,\vA'$ denote their adjacency matrices respectively. Under edge-DP, we can view $\vA'$ as a perturbation of $\vA$; i.e., we have 
\begin{equation}
    \vA' = \vA + \vE,
\end{equation}
where $\vE$ models the affect of adding/deleting an edge in $\setG$. Such an action can be formally expressed as 
\begin{equation}\label{eq:edge_perturb}
    \vE = 
    \begin{cases}
        \ve_i\ve_j^T + \ve_j\ve_i^T, & i \neq j, (i,j) \notin \setE \\
        -(\ve_i\ve_j^T + \ve_j\ve_i^T), & i \neq j, (i,j) \in \setE
    \end{cases}
\end{equation}
where $\ve_i$ denotes the $i^{th}$ canonical basis vector.
When $(i,j) \notin \setE$,  the perturbation $\vE$ models an edge addition to $\setG$, whereas for $(i,j) \in \setE$, $\vE$ models an edge deletion. Note that by construction, $\vE$ is a sparse matrix with a pair of symmetric non-zero entries $E_{ij} = E_{ji} = \pm 1$, and satisfies $\|\vE\|_2 = 1$.  We denote the set of all such possible perturbations $\vE$ obtained by edge addition/removal as $\setP$. 

Let $\vv$ and $\vv'$ denote the principal components of $\vA$ and $\vA'$ respectively. Since $\vA$ and $\vA'$ have non-negative entries, the Perron-Frobenius theorem implies that $\vv$ and $\vv'$ are also element-wise non-negative. Hence, the distance between them can be expressed as 
\begin{equation}
   \textsf{dist}(\vv,\vv') =  \min_{\alpha \in \{-1,+1\}} \|\vv -\alpha\vv'\|_2 =  \|\vv -\vv'\|_2
\end{equation}
We will invoke the Davis-Kahan perturbation bound stated in Corollary \ref{corr:dist_prin} in order to upper bound the local $\ell_2$ sensitivity of $\vv$. This is valid provided the perturbation $\vE$ defined in \eqref{eq:edge_perturb} satisfies
\begin{equation}
\begin{split}
 \|\vE\|_2  &\leq (1-1/\sqrt{2})(|\lambda^*_1|-|\lambda^*_{2}|) \\
\Leftrightarrow  |\lambda^*_1|-|\lambda^*_{2}| &\geq \frac{\|\vE\|_2}{1-1/\sqrt{2}} \\
\Leftrightarrow  \textsf{GAP}(\setG) &\geq \frac{1}{1-1/\sqrt{2}} = \frac{\sqrt{2}}{\sqrt{2}-1}
\end{split}
\end{equation}
Hence, for graphs whose eigen-gap exceeds $\sqrt{2}/(\sqrt{2}-1)$, we can estimate the local sensitivity via the following chain of inequalities.
\begin{equation}
\begin{split}
\textsf{LS}_{\vv}(\setG) &= \max_{\setG':\setG \sim \setG'} \|\vv -\vv'\|_2\\
&\leq \max_{\vE \in \setP} \frac{2\|\mathbf{E}\vv \|_2}{|\lambda_1| - |\lambda_2|}\\
&= \max_{\substack{i \in \setV, j \in \setV,\\ i \neq j}} 
\biggl\{\frac{2\sqrt{v_i^2+v_j^2}}{\textsf{GAP}(\setG)}\biggr\}\\
&\leq \frac{2c_{\pi}}{\textsf{GAP}(\setG)}
\end{split}
\end{equation}
where in the final step, \( v_{\pi(1)} \) and \( v_{\pi(2)} \) are the largest and second-largest elements of $\vv$, respectively. The first inequality follows from Corollary \ref{corr:dist_prin}, the second equality is a consequence of the structured sparsity exhibited by $\vE$, and the final inequality follows since $\sqrt{v_{\pi(1)}^2 + v_{\pi(2)}^2} \geq \sqrt{v_i^2+v_j^2}, \forall \; i \in  \setV, \forall \; j \in \setV, i \neq j$.

\section{Proof of \autoref{theo:upp_bound}} \label{appendix:main_theorem}

Let $\mathbf{v}'$ and $\mathbf{v}''$ denote the principal components of the adjacency matrices $\mathbf{A}'$ and $\vA''$ associated with $\mathcal{G}'$ and a neighboring dataset $\mathcal{G}''$, respectively; i,e, we have $\vA'' = \vA' + \vE'$, where $\vE'$ models the addition/removal of an edge. 
Applying the Davis-Kahan-$\sin\Theta$ Theorem, we obtain
\begin{equation}\label{eq:UB1}
 \textsf{LS}(\mathcal{G}') =  \max_{\mathcal{G}'' \sim \mathcal{G}'} \|\vv' - \vv''\|_2 \leq \frac{2 \max_{\vE' \in \setP} \|\vE'\vv'\|_2}{\textsf{GAP}(\mathcal{G}')} .
\end{equation}
We have implicitly made the assumption that $\textsf{GAP}(\mathcal{G}') > \frac{\|\vE'\|_2}{1-1/\sqrt{2}} = \frac{1}{1-1/\sqrt{2}}$, since $\|\vE'\|_2 = 1$. Later, we will show that this assumption is satisfied under conditions (A1) and (A2).

To establish the desired result, we individually upper and lower bound the numerator and denominator terms of the above bound on $\textsf{LS}(\mathcal{G}')$. First, consider the numerator term. Then, we have 
\begin{equation}\label{eq:part1}
    \begin{split}
       \max_{\vE' \in \setP} \|\vE'\vv'\|_2 &= \max_{\vE' \in \setP} \|\vE'(\vv'-\vv + \vv)\|_2\\
        & \leq \max_{\vE' \in \setP} \|\vE'(\vv'-\vv)\|_2 + \max_{\vE' \in \setP} \| \vE'\vv \|_2\\
        & \leq \|\vv'-\vv\|_2 \cdot \max_{\vE' \in \setP} \|\vE'\|_2 + \max_{\vE' \in \setE} \| \vE'\vv \|_2\\
        & \leq \|\vv'-\vv\|_2 + \sqrt{v_{\pi(1)}^2 + v_{\pi(2)}^2}\\
        & = \textsf{dist}(\vv',\vv) + \sqrt{v_{\pi(1)}^2 + v_{\pi(2)}^2}
    \end{split}
\end{equation}
Viewing $\vA'$ as a perturbation of $\vA$, we can apply the $\sin\Theta$ theorem of Davis-Kahan to bound $\textsf{dist}(\vv',\vv)$ in terms of $\vv$ and $\textsf{GAP}(\mathcal{G})$. Let $\vE = \vA'-\vA$ 
denote the perturbation that transforms $\vA$ to $\vA'$. Note that we have $\|\vE\|_2 \leq d(\mathcal{G},\mathcal{G}')$. Applying the $\sin\Theta$ theorem then yields
\begin{equation}\label{eq:part2}
\textsf{dist}(\vv',\vv) \leq \frac{2\|\vE\vv\|_2}{\textsf{GAP}(\mathcal{G})}
\leq \frac{2\|\vE\|_2\|\vv\|_2}{\textsf{GAP}(\mathcal{G})}
\leq \frac{2d(\mathcal{G},\mathcal{G}')}{\textsf{GAP}(\mathcal{G})},
\end{equation}
provided $\|\vE\|_2 < (1-1/\sqrt{2})\cdot\textsf{GAP}(\mathcal{G})$. Since $\|\vE\|_2 \leq d(\mathcal{G},\mathcal{G}')$, this condition is satisfied if $d(\mathcal{G},\mathcal{G}') < (1-1/\sqrt{2})\cdot\textsf{GAP}(\mathcal{G})$, which is assumption (A1). Combining \eqref{eq:part1} and \eqref{eq:part2}, we obtain 
\begin{equation}\label{eq:UB_main}
    \max_{\vE' \in \setP} \|\vE'\vv'\|_2 \leq  \frac{2d(\mathcal{G},\mathcal{G}')}{\textsf{GAP}(\mathcal{G})} + \sqrt{v_{\pi(1)}^2 + v_{\pi(2)}^2},
\end{equation}
provided the condition listed in assumption (a1) holds.  

Next, consider the denominator term of \eqref{eq:UB1}. To obtain a lower bound, the following lemma will prove useful. 

\begin{lemma}
    \cite[Lemma 11]{gonem2018smooth} Let $\mathcal{G}$ and $\mathcal{G}'$ be a pair of graphs with $d(\mathcal{G},\mathcal{G}')=k$, and suppose $\textsf{GAP}(\mathcal{G}) > 0$. Then
    \begin{equation}
    \max\{\textsf{GAP}(\mathcal{G})-k,0\}  \leq  \textsf{GAP}(\mathcal{G}') \leq \textsf{GAP}(\mathcal{G}) + k.
    \end{equation}
\end{lemma}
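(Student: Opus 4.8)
The plan is to establish the two-sided bound $\max\{\textsf{GAP}(\mathcal{G})-k,0\} \leq \textsf{GAP}(\mathcal{G}') \leq \textsf{GAP}(\mathcal{G})+k$ by viewing $\mathcal{G}'$ as obtained from $\mathcal{G}$ through a sequence of $k$ single-edge modifications and tracking how the eigenvalues $\lambda_1$ and $|\lambda_2|$ can move under each step. The key elementary fact I would invoke is Weyl's inequality: if $\vA' = \vA + \vE$ with $\|\vE\|_2 \leq 1$, then $|\lambda_i(\vA') - \lambda_i(\vA)| \leq 1$ for every $i$ (eigenvalues sorted in the usual algebraic order). Since $d(\mathcal{G},\mathcal{G}') = k$ means the adjacency matrices differ by a perturbation $\vE$ of spectral norm at most $k$ (each single edge flip contributes a rank-$2$ perturbation of norm $1$, and these can be summed), we get $|\lambda_i(\vA') - \lambda_i(\vA)| \leq k$ for all $i$.

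The next step is to translate this into a statement about $\textsf{GAP} = \lambda_1 - |\lambda_2|$. Writing $\lambda_1' = \lambda_1(\vA')$ and noting $|\lambda_2'|$ is the second-largest eigenvalue in magnitude, I would argue: $\lambda_1' \geq \lambda_1 - k$ and $\lambda_1' \leq \lambda_1 + k$ directly from Weyl. For the second eigenvalue in magnitude, I need that $|\lambda_2'|$ cannot change by more than $k$ either; this requires a short argument since ``second largest in absolute value'' is not simply $\lambda_2(\vA')$ in the algebraic ordering — one must consider both $\lambda_2$ (second largest algebraically) and $\lambda_n$ (most negative), each of which moves by at most $k$ under the perturbation, so the quantity $\max\{|\lambda_2(\vA')|, |\lambda_n(\vA')|\} = |\lambda_2'|$ also moves by at most $k$ from its value on $\vA$. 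Combining, $\textsf{GAP}(\mathcal{G}') = \lambda_1' - |\lambda_2'| \geq (\lambda_1 - k) - (|\lambda_2| + k)$ is too weak (gives $-2k$); instead I would be more careful: the upper bound $\textsf{GAP}(\mathcal{G}') \leq \textsf{GAP}(\mathcal{G}) + k$ should follow because increasing $\lambda_1$ by at most $k$ while $|\lambda_2'|$ can only decrease by at most $k$ — but these cannot both happen maximally in a way that sums, since each single-edge perturbation moves \emph{all} eigenvalues by a \emph{common} bounded amount in the interlacing sense. The cleanest route is to induct on $k$: for $k=1$, a single edge flip is a perturbation $\vE$ with $\|\vE\|_2 = 1$, and one shows $|\textsf{GAP}(\mathcal{G}') - \textsf{GAP}(\mathcal{G})| \leq 1$ by a direct case analysis on how $\lambda_1$ and $|\lambda_2|$ each shift by at most $1$; then the general $k$ follows by composing $k$ such steps and the triangle inequality. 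The $\max\{\cdot, 0\}$ in the lower bound is automatic since $\textsf{GAP}(\mathcal{G}') \geq 0$ always (as $\lambda_1 \geq |\lambda_2|$ by definition for a nonnegative matrix's Perron eigenvalue, or trivially since $\lambda_1$ is the largest in magnitude).

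The main obstacle I anticipate is handling the second-largest-in-\emph{magnitude} eigenvalue carefully, because Weyl's inequality is naturally stated for the algebraically-ordered spectrum, whereas $|\lambda_2|$ mixes information from $\lambda_2$ and $\lambda_n$. The subtlety is ruling out a scenario where $\lambda_1$ moves up by $k$ \emph{and simultaneously} $|\lambda_2|$ moves down by $k$, which would give a gap change of $2k$ rather than $k$. Resolving this requires observing that under a single perturbation of norm $1$, $\lambda_1$ and every other eigenvalue shift by amounts that are individually bounded by $1$ but correlated through the perturbation — and in fact the tight per-step bound on $|\textsf{GAP}(\mathcal{G}) - \textsf{GAP}(\mathcal{G}')|$ is exactly $1$, not $2$, which one verifies by noting $\textsf{GAP}$ is itself a $1$-Lipschitz function of the matrix in spectral norm (it is a difference of two eigenvalue functionals, but one can check directly that $\textsf{GAP}(\vA+\vE) - \textsf{GAP}(\vA) \in [-\|\vE\|_2, \|\vE\|_2]$ by a variational argument). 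Once that $1$-Lipschitz property is in hand, the lemma follows immediately by $\|\vA' - \vA\|_2 \leq d(\mathcal{G},\mathcal{G}') = k$ and the triangle inequality, with the $\max\{\cdot,0\}$ appended because $\textsf{GAP}$ is nonnegative.
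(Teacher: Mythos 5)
The paper does not prove this lemma; it is cited directly from \cite[Lemma~11]{gonem2018smooth}, so there is no in-paper argument to compare against. Your proposal must therefore stand on its own, and unfortunately it rests on a claim that is not true.

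Your whole argument funnels through the assertion that $\textsf{GAP}(\vA) = \lambda_1(\vA) - |\lambda_2(\vA)|$ is $1$-Lipschitz in the spectral norm, i.e.\ that $\textsf{GAP}(\vA+\vE) - \textsf{GAP}(\vA) \in [-\|\vE\|_2, \|\vE\|_2]$. You gesture at a ``variational argument'' for this but do not give it, and in fact no such argument exists: $\textsf{GAP}$ is only $2$-Lipschitz in general, and the factor $2$ is attained. Concretely, take
$\vA = \mathrm{diag}(3,\,1,\,-1)$ and $\vE = \mathrm{diag}(-1,\,1,\,0)$.
Here $\vE$ is symmetric, rank two, trace zero, with eigenvalues $\{+1,-1,0\}$ and $\|\vE\|_2 = 1$, which is exactly the spectral profile of a single edge flip $\pm(\ve_i\ve_j^{\top} + \ve_j\ve_i^{\top})$. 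One has $\textsf{GAP}(\vA) = 3 - 1 = 2$, while $\vA + \vE = \mathrm{diag}(2,\,2,\,-1)$ gives $\textsf{GAP}(\vA+\vE) = 2 - 2 = 0$, a drop of $2\|\vE\|_2$. Dually, from $\mathrm{diag}(2,2,-1)$ the perturbation $-\vE$ increases the gap by $2\|\vE\|_2$. So the scenario you set out to rule out --- $\lambda_1$ and $|\lambda_2|$ both shifting by a full $\|\vE\|_2$ in opposite directions --- does occur. Weyl's inequality applied separately to $\lambda_1$ (equivalently $\sigma_1$) and $|\lambda_2|$ (equivalently $\sigma_2$) is tight for each, and there is no hidden cancellation forcing their shifts to be correlated. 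Your ``cleanest route'' via induction on $k$ has the same issue in its base case: you need $|\textsf{GAP}(\mathcal{G}') - \textsf{GAP}(\mathcal{G})| \leq 1$ for a single edge flip, and the direct case analysis you describe (each of $\lambda_1$ and $|\lambda_2|$ moving by at most $1$) only yields a bound of $2$, which is where the argument stalls.

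To actually prove the lemma with the constant $k$ (rather than $2k$), any correct argument must use structure beyond ``symmetric matrix plus spectral-norm-bounded perturbation,'' since the statement is simply false at that level of generality, as the example shows. Whatever additional structure \cite{gonem2018smooth} exploits --- their normalization of the data, positive semidefiniteness of a covariance matrix, or a different neighboring-dataset convention --- is precisely what is missing from your proposal, and you would need to import it before the induction or the Lipschitz route can close. The nonnegativity floor $\max\{\cdot,0\}$, which you handle correctly, is the easy part; the Lipschitz constant is the crux, and it is where the proposal breaks.
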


If we wish to apply the non-trivial version of the lower bound on $\textsf{GAP}(\mathcal{G}')$, then we require $d(\mathcal{G},\mathcal{G}') < \textsf{GAP}(\mathcal{G})$. Note that under assumption (A1), this condition is already satisfied. Hence, we obtain
\begin{equation}\label{eq:LB_main}
    \textsf{GAP}(\mathcal{G}') \geq \textsf{GAP}(\mathcal{G})-d(\mathcal{G},\mathcal{G}').
\end{equation}
On combining \eqref{eq:UB_main} and \eqref{eq:LB_main}, we obtain the upper bound
\begin{equation}\label{eq:UB2}
    \frac{2 \max_{\vE' \in \setP} \|\vE'\vv'\|_2}{\textsf{GAP}(\mathcal{G}')} \leq 
    \frac{2}{\textsf{GAP}(\mathcal{G})-d(\mathcal{G},\mathcal{G}') } \cdot
    \biggl[\frac{2d(\mathcal{G},\mathcal{G}')}{\textsf{GAP}(\mathcal{G})}+ \sqrt{v_{\pi(1)}^2 + v_{\pi(2)}^2}\biggr]
\end{equation}
It only remains to chain the above inequality with \eqref{eq:UB1}. However, the inequalities \eqref{eq:UB1} and \eqref{eq:UB2} were derived under different assumptions, and a little care must be taken to ensure that they hold simultaneously. Note that \eqref{eq:UB1} requires that $\textsf{GAP}(\mathcal{G}') > \sqrt{2}/(\sqrt{2}-1)$. If assumption (A1) is satisfied, we know that the lower bound \eqref{eq:LB_main} applies. Hence, if $\textsf{GAP}(\mathcal{G})-d(\mathcal{G},\mathcal{G}') >\sqrt{2}/(\sqrt{2}-1)$, it implies that $\textsf{GAP}(\mathcal{G}') > \sqrt{2}/(\sqrt{2}-1)$. It remains to work out what is the minimum value of $\textsf{GAP}(\mathcal{G})$ required so that (A1) and $\textsf{GAP}(\mathcal{G})-d(\mathcal{G},\mathcal{G}') >\sqrt{2}/(\sqrt{2}-1)$ are both valid. Under (A1), we have
\begin{equation}
    \textsf{GAP}(\mathcal{G})-d(\mathcal{G},\mathcal{G}') > \textsf{GAP}(\mathcal{G}).(1/\sqrt{2})
\end{equation}
If $\textsf{GAP}(\mathcal{G}).(1/\sqrt{2}) > \sqrt{2}/(\sqrt{2}-1)$, it implies the desired condition $\textsf{GAP}(\mathcal{G})-d(\mathcal{G},\mathcal{G}') >\sqrt{2}/(\sqrt{2}-1)$. The former condition is satisfied by $\textsf{GAP}(\mathcal{G}) > 2/(\sqrt{2}-1)$, which is assumption (A2). 

To conclude, under assumptions (A1) and (A2), we are free to chain together the inequalities \eqref{eq:UB1} and \eqref{eq:UB2}. Doing so yields the claimed bound
$$
\textsf{LS}(\mathcal{G}') \leq \frac{2}{\textsf{GAP}(\mathcal{G})-d(\mathcal{G},\mathcal{G}') } \cdot
    \biggl[\frac{2d(\mathcal{G},\mathcal{G}')}{\textsf{GAP}(\mathcal{G})}+ \sqrt{v_{\pi(1)}^2 + v_{\pi(2)}^2}\biggr].
$$
This concludes the proof.  

\section{The difficulty with smooth sensitivity}\label{smooth_sens}

Smooth sensitivity \cite{nissim2007smooth} is a framework for obtaining DP guarantees while relying on local sensitivity based quantities to calibrate the level of injected noise, as opposed to using the global sensitivity. To be specific, the {\em smooth sensitivity}, for a graph dataset $\setG$ is defined as
\begin{equation}\label{eq:smooth_sens}
\small
    S_f^{\beta}(\setG):= \max_{\setG'} \biggl\{ \textsf{LS}_{f}(\setG')\cdot \exp(-\beta d(\setG,\setG')) \biggr\}.
\end{equation}
Here, $f:\setG \rightarrow \mathbb{R}^n$ denotes the target function to be privatized, $\beta > 0$ is a parameter, and $d(\setG,\setG')$ denotes the Hamming distance between $\setG$ and $\setG'$. By adding i.i.d. Gaussian noise to $f$ that is calibrated to $S_f^{\beta}(\setG)$, it can be shown that the resulting output satisfies DP (with $\beta$ reflecting the desired privacy parameters $(\epsilon, \delta))$.
The smooth sensitivity value $S_f^{\beta}(\setG)$ can be viewed as the tightest upper bound on the local sensitivity that provides DP. However, the catch is that solving the optimization problem \eqref{eq:smooth_sens} is non-trivial in general, which renders practical application of the smooth sensitivity framework challenging. The prior work of \cite{gonem2018smooth} employed the smooth sensitivity framework for computing principal components of general datasets; albeit not for graphs under edge-DP. Using the local sensitivity estimate in \cite[Theorem 5]{gonem2018smooth}, the authors developed tractable smooth upper bounds on $S_f^{\beta}(\setG)$, which can then be used to provide DP. However, successfully adapting this approach to our present context presents several difficulties. In \ref{smooth_sens}, we provide a rigorous analysis which shows that under mild conditions, the obtained smooth upper bound is very close to the global sensitivity estimate of $\sqrt{2}$.

In this section, we illustrate the difficulties associated with adopting the smooth sensitivity framework of \cite{gonem2018smooth,nissim2007smooth} for computing principal components of $\setG$ under edge DP. 

Following \cite{nissim2007smooth}, for a graph $\setG$, the $\beta$-smooth sensitivity of the principal component $\vv$ with smoothness parameter $\beta >0$ can be defined as 
\begin{equation}\label{eq:smooth}
    S_{\vv}^{\beta}(\setG) = \max_{t \in \{0,1,\cdots,\binom{n}{2}\}} \biggl\{e^{-\beta t} \cdot\gamma_{\vv}^{(t)}(\setG)\biggr\},
\end{equation}
where
\begin{equation}\label{eq:gamma}
  \gamma_{\vv}^{(t)}(\setG) := \max_{\setG':d(\setG,\setG')=t} \textsf{LS}_{\vv}(\setG')
\end{equation}
is the local $\ell_2$-sensitivity of $\vv$ at Hamming distance $t$ from $\setG$.

Since computing $S_{\vv}^{\beta}(\setG)$ exactly can prove to be challenging, we adopt the approach of \cite{gonem2018smooth} to obtain smooth upper bounds on its value. To this end, the following result of \cite[Claim 3.2]{nissim2007smooth} is useful.
\begin{lemma}\label{lemma:smooth_bound}
For an admissible $t_0 \in \{0,1,\cdots,\binom{n}{2}\}$, let
\begin{equation}
\hat{S}_{\vv}^{\beta}(\setG) := \max \biggl( \max_{t \in \{0,\cdots,t_0-1\}}\{ e^{-\beta t}\cdot\gamma_{\vv}^{(t)}(\setG)\}, ~\textsf{GS}_{\vv} \cdot e^{-\beta t_0} \biggr),
\end{equation}
where $\textsf{GS}_{\vv}$ denotes the global $\ell_2$-sensitivity of $\vv$. Then, $\hat{S}_{\vv}^{\beta}(\setG)$ is a $\beta$-smooth upper bound.
\end{lemma}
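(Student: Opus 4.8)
The plan is to check that $\hat{S}_{\vv}^{\beta}$ satisfies the two defining requirements of a $\beta$-smooth upper bound on the local $\ell_2$ sensitivity $\textsf{LS}_{\vv}$: (i) it dominates $\textsf{LS}_{\vv}$ pointwise, i.e. $\hat{S}_{\vv}^{\beta}(\setG) \geq \textsf{LS}_{\vv}(\setG)$ for every $\setG$; and (ii) it is $\beta$-smooth, i.e. $\hat{S}_{\vv}^{\beta}(\setG) \leq e^{\beta}\,\hat{S}_{\vv}^{\beta}(\setG')$ for every neighboring pair $\setG \sim \setG'$. Part (i) is immediate from the definition: if $t_0 \geq 1$ the inner maximum contains the $t=0$ term $e^{0}\,\gamma_{\vv}^{(0)}(\setG) = \textsf{LS}_{\vv}(\setG)$, and if $t_0 = 0$ the inner maximum is empty and $\hat{S}_{\vv}^{\beta}(\setG) = \textsf{GS}_{\vv} \geq \textsf{LS}_{\vv}(\setG)$ by definition of the global sensitivity.

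For part (ii), the one ingredient tied to the structure of the problem is a ``sliding'' bound relating $\gamma_{\vv}^{(\cdot)}$ at a graph and at one of its neighbors. Because Hamming distance obeys the triangle inequality, every graph at distance exactly $t$ from $\setG$ lies at distance $t-1$, $t$, or $t+1$ from $\setG'$, hence
\[
\gamma_{\vv}^{(t)}(\setG) \;\leq\; \max\bigl\{\gamma_{\vv}^{(t-1)}(\setG'),\;\gamma_{\vv}^{(t)}(\setG'),\;\gamma_{\vv}^{(t+1)}(\setG')\bigr\},
\]
where the $t-1$ term is dropped when $t=0$, and $\gamma_{\vv}^{(s)}(\setG') \leq \textsf{GS}_{\vv}$ is used whenever the index $s$ reaches $t_0$. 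I would then fix $\setG \sim \setG'$ and run a short case analysis on the term attaining the outer maximum in $\hat{S}_{\vv}^{\beta}(\setG)$. If it is the dataset-independent cap $\textsf{GS}_{\vv}\,e^{-\beta t_0}$, the identical term appears in $\hat{S}_{\vv}^{\beta}(\setG')$, so $\hat{S}_{\vv}^{\beta}(\setG) \leq \hat{S}_{\vv}^{\beta}(\setG') \leq e^{\beta}\hat{S}_{\vv}^{\beta}(\setG')$. If it is $e^{-\beta t}\gamma_{\vv}^{(t)}(\setG)$ for some $t \leq t_0-1$, substitute the sliding bound and absorb the index shift using $e^{-\beta t} = e^{\beta}e^{-\beta(t+1)}$, $e^{-\beta t} = e^{-\beta}e^{-\beta(t-1)}$, or $e^{-\beta t}=e^{-\beta t}$ as appropriate; in each sub-case the result is a term of the form $e^{-\beta s}\gamma_{\vv}^{(s)}(\setG')$ with $s \leq t_0-1$ (or the cap $\textsf{GS}_{\vv}e^{-\beta t_0}$ when $s=t_0$) times a factor at most $e^{\beta}$, hence at most $e^{\beta}\hat{S}_{\vv}^{\beta}(\setG')$. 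Collecting the cases yields (ii), and (i)+(ii) together give the claim.

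Finally I would note that this statement is Claim~3.2 of \cite{nissim2007smooth} instantiated with $f=\vv$ and the $\ell_2$ metric; nothing in the argument uses any property of $\vv$ beyond the definitions of $\gamma_{\vv}^{(t)}$ and $\textsf{GS}_{\vv}$, so the value $\textsf{GS}_{\vv}=\sqrt{2}$ from \autoref{lemma:GS} enters only numerically. I expect the only mildly delicate point to be the boundary bookkeeping at $t=t_0-1$, where the ``successor'' index lands on the global-sensitivity cap rather than on another $\gamma_{\vv}^{(\cdot)}$ term, together with fixing the convention for $\gamma_{\vv}^{(t)}$ (maximum over distance exactly $t$ versus at most $t$) so that the sliding bound above is stated correctly; neither constitutes a genuine obstacle.
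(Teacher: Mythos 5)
The paper offers no proof of this lemma; it simply attributes it to Claim~3.2 of Nissim, Raskhodnikova, and Smith, which you correctly identify. Your reconstruction is right and is essentially the standard argument underlying that citation: pointwise domination of $\textsf{LS}_{\vv}$ follows from the $t=0$ term (or from $\textsf{GS}_{\vv}$ when $t_0=0$), and $\beta$-smoothness follows from the triangle-inequality ``sliding'' bound $\gamma_{\vv}^{(t)}(\setG) \leq \max\{\gamma_{\vv}^{(t-1)}(\setG'),\gamma_{\vv}^{(t)}(\setG'),\gamma_{\vv}^{(t+1)}(\setG')\}$ together with the boundary case at $t+1=t_0$ handled by the cap $\textsf{GS}_{\vv}\,e^{-\beta t_0}$, exactly as you lay out.
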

In order to compute the above smooth upper bound, we will utilize the local $\ell_2$-sensitivity estimate of $\vv$ derived in \autoref{theo:LS}, which is valid for all graphs with a spectral gap of at least $\nu :=\sqrt{2}/(\sqrt{2}-1)$. The main idea is now to modify the proof technique of \cite{gonem2018smooth} so that this condition can be incorporated. 
To this end, define the following ``gap-restricted'' analogue of \eqref{eq:gamma}. 
\begin{equation}
  \hat{\gamma}_{\vv}^{(t)}(\setG) := \max_{\substack{\setG':d(\setG,\setG')=t,\\ \textsf{GAP}(\setG') > \nu}} \textsf{LS}_{\vv}(\setG').
\end{equation}
Note that in general, we have $\hat{\gamma}^{(t)}(\setG) \leq {\gamma}^{(t)}(\setG)$. However, from the lower bound in Lemma 2, we know that
for $t < \textsf{GAP}(\setG)-\nu$, we have $\textsf{GAP}(\setG') > \nu$. We conclude that
\begin{equation}\label{eq:combo}
  {\gamma}^{(t)}_{\vv}(\setG)=\hat{\gamma}_{\vv}^{(t)}(\setG), \forall \; t < \textsf{GAP}(\setG) - \nu
\end{equation}

Next, we invoke \autoref{theo:upp_bound}, which asserts that
\begin{equation}\label{eq:upp_bound_var1}
\textsf{LS}(\setG') \leq \frac{2}{\textsf{GAP}(\setG)-t} \cdot
    \biggl[\frac{2t}{\textsf{GAP}(\setG)}+ \sqrt{v_{\pi(1)}^2+v_{\pi(2)}^2}\biggr]    
\end{equation}
subject to the conditions 
$t < (1-1/\sqrt{2})\cdot\textsf{GAP}(\setG) := \textsf{GAP}(\setG)/\nu$ and 
$\textsf{GAP}(\setG) > \frac{2}{\sqrt{2}-1}:=\sqrt{2}\nu$.
Furthermore, a little calculation reveals that the assumption
\begin{equation}\label{eq:imp}
\textsf{GAP}(\setG) > \sqrt{2}\nu \implies \frac{\textsf{GAP}(\setG)}{\nu} < \textsf{GAP}(\setG) - \nu,
\end{equation}
which will prove useful. In particular, from \eqref{eq:combo} and \eqref{eq:imp}, we obtain that for graphs with spectral gap at least $\sqrt{2}\nu$
\begin{equation}\label{eq:combo3}
  {\gamma}^{(t)}_{\vv}(\setG)=\hat{\gamma}_{\vv}^{(t)}(\setG), \forall \; t \leq \frac{\textsf{GAP}(\setG)}{\nu}.
\end{equation}
We are now free to use \eqref{eq:upp_bound_var1} to upper bound ${\gamma}^{(t)}_{\vv}(\setG)$. Doing so yields
\begin{equation}
\begin{split}
    {\gamma}^{(t)}_{\vv}(\setG) &\leq \frac{2}{\textsf{GAP}(\setG)-t} \cdot
    \biggl[\frac{2t}{\textsf{GAP}(\setG)}+ \sqrt{v_{\pi(1)}^2+v_{\pi(2)}^2}\biggr], \\
    &\leq \frac{2}{\textsf{GAP}(\setG)-t} \cdot
    \biggl[\frac{2}{\nu}+ \sqrt{u_{\pi(1)}^2+u_{\pi(2)}^2}\biggr], \\
    & \leq\frac{2\sqrt{2}}{\textsf{GAP}(\setG)}\biggl[\frac{2}{\nu}+ \sqrt{v_{\pi(1)}^2+v_{\pi(2)}^2}\biggr], \forall \; t < \frac{\textsf{GAP}(\setG)}{\nu}. 
\end{split} 
\end{equation}
We are now ready to apply Lemma \ref{lemma:smooth_bound}.
Let $t_0 = \frac{\textsf{GAP}(\setG)}{\nu}$. Then, we have
\begin{equation}
 e^{-\beta t}\cdot\gamma_{\vv}^{(t)}(\setG) \leq  \gamma_{\vv}^{(t)}(\setG) \leq  \frac{2\sqrt{2}}{\textsf{GAP}(\setG)}\biggl[\frac{2}{\nu}+ \sqrt{v_{\pi(1)}^2+v_{\pi(2)}^2}\biggr], \forall \; t < t_0
\end{equation}
This yields the following smooth sensitivity bound for the principal component of graphs with eigen-gap larger than $2\nu$.
\begin{equation}\label{eq:2smooth}
\hat{S}_{\vv}^{\beta}(\setG) = \max\biggl\{ \frac{2\sqrt{2}}{\textsf{GAP}(\setG)}\biggl[\frac{2}{\nu}+ \sqrt{v_{\pi(1)}^2+v_{\pi(2)}^2}\biggr], \sqrt{2}e^{-\beta \frac{\textsf{GAP}(\setG)}{\nu} }
\biggr\} 
\end{equation}
Let us examine the obtained smooth upper bound. In order to apply this bound to obtain approximate $(\epsilon,\delta)$-DP via the Gaussian mechanism, \cite[Lemma 2.7]{nissim2007smooth} asserts that 
\begin{equation}
    \beta = \frac{\epsilon}{4(n+\ln (2/\delta))}.
\end{equation}
This reveals the primary drawback - namely the dependence of $\beta$ on the dimension $n$. Since
$\beta = O(\epsilon/n)$ (discounting $\delta$ for the moment), for even moderately sized graphs on $n$ vertices, the value of $\beta$ can be very small (i.e., $\ll 1$) for reasonable choices of privacy parameters $(\epsilon,\delta)$. We conclude that the smooth upper bound
\begin{equation}
\begin{split}
\hat{S}_{\vv}^{\beta}(\setG) &= \sqrt{2}\exp\biggl(-O\biggl(\frac{\epsilon\textsf{GAP}(\setG)}{n}\biggr)\biggr) \\
& \approx \sqrt{2} \biggl(1- O\biggl(\frac{\epsilon\textsf{GAP}(\setG)}{n}\biggr) \biggr)
\end{split}
\end{equation}
where the approximation in the last step holds when $n \gg \epsilon\textsf{GAP}(\setG)$. For the datasets considered in this paper, we observed that computing the smooth upper bound with the same privacy budget allotted to PTR yields values close to $\sqrt{2}$, which is the global sensitivity value.\\

\section{Proof of Lemma \autoref{phi_zero}} \label{proof_phi_zero}
 The function $\theta(\setG')$ is monotonically increasing in $d(\setG,\setG')$ for all feasible datasets $\setG'$ for which $\theta(\setG') \geq 0$. Hence, the minimum is attained by selecting $\setG=\setG'$, from which it follows that $\psi(\setG) = 0$.

\section{Sampling noise from the TBL distribution} \label{sampling_TBL}
Let $F$ be the CDF of the \emph{untruncated} Laplace$(\mu,\lambda)$:
\[
F(x)=
\begin{cases}
\dfrac{1}{2}\,e^{(x-\mu)/\lambda}, & x\le \mu,\\[6pt]
1-\dfrac{1}{2}\,e^{-(x-\mu)/\lambda}, & x>\mu.
\end{cases}
\]
Truncating to $[0,R]$ means conditioning on that interval. The truncated CDF $G$ is
\[
G(x)=\Pr[X\le x\mid 0\le X\le R]=\frac{F(x)-F(0)}{F(R)-F(0)}\quad(x\in[0,R]).
\]
Hence the \emph{inverse} truncated CDF is
\[
G^{-1}(u)=F^{-1}\!\big(F(0)+u\,[F(R)-F(0)]\big)\quad(u\in(0,1)).
\]
Sampling algorithm:
\begin{enumerate}
  \item Draw $u\sim\mathrm{Unif}(0,1)$, set $u'=F(0)+u\,[F(R)-F(0)]$.
  \item Invert the \emph{untruncated} Laplace CDF at $u'$:
  \[
  X=
  \begin{cases}
  \mu_L+\lambda_L\ln\!\big(2u'\big), & u'\le \tfrac12,\\[4pt]
  \mu_L-\lambda_L\ln\!\big(2(1-u')\big), & u'> \tfrac12.
  \end{cases}
  \]
\end{enumerate}
Note the branch test must be against $u'$ (the \emph{untruncated} CDF value), not $u$, unless the truncation is symmetric ($R=2\mu$), in which special case the branch also happens at $u=1/2$.

\paragraph{Closed forms for $F(0)$ and $F(R)$.}
For $\mu>0$ and $R\ge \mu$,
\[
F(0)=\tfrac12 e^{-\mu/\lambda},\qquad F(R)=1-\tfrac12 e^{-(R-\mu)/\lambda},
\]
so the truncated interval mass is $Z_{\mu,\lambda,R}=F(R)-F(0)=1-\tfrac12 e^{-(R-\mu)/\lambda}-\tfrac12 e^{-\mu/\lambda}$, and $R=2\mu$ we get $Z_{\mu,\lambda,R}=1-e^{-\mu/\lambda}$.

\section{Proof of Lemma \autoref{phi_sensitivity}} \label{proof_phi_sensitivity}
Contrary to the standard PTR and modified PTR \cite{li2024computing}, here ${\sf GS}_{\phi}$ is not necessarily equal to 1. 

To compute the global sensitivity of $\phi(G)$, it is essential to account for the step preceding this part of the algorithm. In the previous stage, we obtained the private function $\tilde{f}(G)$, whose value directly influences the computation of $\phi(G)$. Since $\tilde{f}(G)$ is produced in an earlier step, we can treat it as a constant when analyzing the global sensitivity of $\phi(G)$. In other words, the value of $\tilde{f}(G)$ is regarded as prior knowledge in the sensitivity analysis of $\phi(G)$.

Let $\setG \sim \setG''$ be a pair of neighboring instances. Then, for a fixed draw of $\tilde{Z} = z$, we consider the following two cases.

\noindent {\bf Case 1:} $u(\tilde{f}(\setG)) = u(\tilde{f}(\setG''))$. The value equals either $0$ or $1$. In the case of the former, from Lemma $1$, we obtain $\phi(\setG) = \phi(\setG'') = 0$, and the sensitivity is $0$. Otherwise, the sensitivity is $1$ by a standard argument.

\noindent {\bf Case 2:} $u(\tilde{f}(\setG)) \neq u(\tilde{f}(\setG''))$. Suppose that $u(\tilde{f}(\setG)) = 0$ but $u(\tilde{f}(\setG'')) = 1$. Then, we have $\phi(\setG) = 0$ and $\phi(\setG'') \geq 0$. Hence, the sensitivity of $\phi$ is $\phi(\setG'')$. In order to determine how large this quantity can be, we use the following facts: (1) $\tilde{f}(\setG) \leq 0$ and $\tilde{f}(\setG'') >0$, and (2) $|\textsf{GAP}(\mathcal{G})- \textsf{GAP}(\mathcal{G}'') | \leq 1$. 
From the first fact, we obtain $\textsf{GAP}(\mathcal{G}) \leq t+z$, whereas the second fact together with $\textsf{GAP}(\mathcal{G}'') > t+z$ yield the boundary condition $\textsf{GAP}(\mathcal{G}) > t+z-1$. Hence, this condition arises when the gap of $\setG$ lies in the interval $(t+z-1,t+z]$, which is equivalent to $\tilde{f}(\setG) \in (-1,0]$. Since $\textsf{GAP}(\mathcal{G}'') > t+z > t$, $\setG''$ lies in the large gap regime and thus from [Theorem \ref{theo:upp_bound}, (A2)], it holds that
\begin{equation}
    \phi(\setG'')  < \left(1 - \frac{1}{\sqrt{2}}\right)\cdot\textsf{GAP}(\mathcal{G}'') \leq \left(1 - \frac{1}{\sqrt{2}}\right)\cdot(\textsf{GAP}(\mathcal{G})+1)
    \leq \left(1 - \frac{1}{\sqrt{2}}\right)\cdot(t+z+1)
    \leq \left(1 - \frac{1}{\sqrt{2}}\right)\cdot(t+2\mu+1)
\end{equation}
where in the final inequality, we have used the fact that $z \in [0,2\mu]$. 

Now consider the opposite case when $u(\tilde{f}(\setG)) = 1$ but $u(\tilde{f}(\setG'')) = 0$. Then, $\phi(\setG) \geq 0$ and $\phi(\setG'') = 0$. Hence, the sensitivity is determined by how large $\phi(\setG)$ can be. To upper-bound this quantity, we proceed as before. Since $\tilde{f}(\setG) > 0$, we obtain $\textsf{GAP}(\mathcal{G}) > t+z$. Meanwhile, from $\tilde{f}(\setG'') \leq 0$ and the fact that the gap has sensitivity $1$, we obtain the boundary condition
$\textsf{GAP}(\mathcal{G}) \leq t+z+1$. Hence, this scenario arises when the gap of $\setG$ lies in the interval $(t+z,t+z+1]$, which is equivalent to $\tilde{f}(\setG) \in (0,1]$. Since $\setG$ lies in the large gap regime, from [Theorem \ref{theo:upp_bound}, (A2)], we obtain 
\begin{equation}
    \phi(\setG)  < \left(1 - \frac{1}{\sqrt{2}}\right)\cdot\textsf{GAP}(\mathcal{G}) 
    \leq \left(1 - \frac{1}{\sqrt{2}}\right)\cdot(t+z+1)
    \leq \left(1 - \frac{1}{\sqrt{2}}\right)\cdot(t+2\mu+1)
\end{equation}

All in all, to determine the sensitivity of $\phi(G)$, we first check whether $-1 \leq \tilde{f}(G) \leq 1$. If this condition holds, then ${\sf GS}_{\phi}=(1- \frac{1}{\sqrt{2}}) (t+ 2\mu+1)$. Otherwise, ${\sf GS}_{\phi}= 1$. 

\section{Proof of \autoref{ptr_dp}} \label{proof_dp}

The key step is to establish that the computation of $\hat{\phi}(\setG)$ is private for all for graphs. Note that $\hat{\phi}(\setG)$ can be viewed as the output of an adaptive composition mechanism. The first mechanism is the TBLM for privatizing $f(\setG)$, whose output $\tilde{f}(\setG)$ is $(\epsilon_0,\delta_0)$-DP, as established by Fact 1. Furthermore, $\tilde{f}(\setG)$ is then applied as an input to problem \ref{eq:phi_main}, whose solution $\phi(\setG)$ is then privatized via the standard Laplace mechanism. In order to invoke adaptive composition, we need to show that conditioned on the input $\tilde{f}(\setG)$, $\hat{\phi}(\setG)$ is DP. To this end, note that 
if one replaces the private output of TBLM $\tilde{f}(\setG)$ with its non-private counterpart $f(\setG)$ in Lemma 2, then the scale parameters $S_1,S_2$ in \eqref{eq:GS_phi} correspond to the local sensitivity of $\phi(\cdot)$.
Clearly, adding Laplacian noise scaled to these parameters is not guaranteed to preserve edge-DP. However, if we select the scale parameter ${\sf GS}_{\phi}$ on the basis of $\tilde{f}(\setG)$, as in \eqref{eq:phi_main}, then we have that 
\begin{equation}
 {\sf GS}_{\phi} =   S_1 \cdot \mathbf{1}_{\{-1<\tilde{f}(\setG) < 1\}} + S_2 \cdot \mathbf{1}_{\{\tilde{f}(\setG) \leq -1\} \cup \{\tilde{f}(\setG) \geq 1\}}
\end{equation}
By the post-processing property of DP, it follows that ${\sf GS}_{\phi}$ is also $(\epsilon_0,\delta_0)$ edge-DP. Hence, using $ {\sf GS}_{\phi}$ in the Laplace mechanism to privatize $\phi(\cdot)$ is guaranteed to be $(\epsilon_0,0)$-DP. Direct application of adaptive composition then guarantees that the total privacy budget needed to privatize $\phi(\cdot)$ is $(\epsilon_0+\epsilon_1,\delta_0)$. We are now ready to state the privacy of the overall algorithm.

Depending on the value of $\tilde{f}(\setG)$, the following outcomes are possible. 
\begin{enumerate}
    \item[{\bf Case 1:}] $\tilde{f}(\setG) \leq 0:$ The proposed bound in problem \ref{eq:phi_main} is $0$, and $\phi(\setG) = 0$. From Lemma 2, $\hat{\phi}(\setG) \sim {\sf Lap}(GS_{\phi}/\epsilon_1)$. By properties of the Laplace distribution, the probability that $\hat{\phi}(\setG) \leq (GS_{\phi}\ln (1 / \delta))/\epsilon_1$ in the test stage of PTR is at least $1-\delta$, and the algorithm refuses to yield a response for such a dataset. Otherwise, with probability at most $\delta$, the algorithm is not private.
    We conclude that the output of the test stage of the algorithm is $(0,\delta)$-DP. From basic composition, the overall privacy offered by the algorithm totals $(\epsilon_0 + \epsilon_1,\delta_0 + \delta)$.
    \item[{\bf Case 2:}] $\tilde{f}(\setG) > 0:$ The proposed bound in problem \ref{eq:phi_main} is $\beta$. The remainder of the analysis is broken down into two further two sub-cases which depend on $\beta$. First, let $\beta < {\sf LS}_{\vv}(\setG)$. Then $\gamma(\setG) = 0$, which implies that $\phi(\setG) = 0$, since $\gamma(\setG)\geq\phi(\setG)\geq0$.  
    By the same argument as the previous case, the probability that $\hat{\phi}(\setG) \leq (GS_{\phi}\ln (1 / \delta))/\epsilon_1$ and the algorithm stops is at least $1-\delta$. Hence the overall privacy privacy totals $(\epsilon_0 + \epsilon_1,\delta_0+\delta)$. In the other sub-case, $\beta \geq {\sf LS}_{\vv}(\setG)$, and the overall output is $(\epsilon_0+\epsilon_1 + \epsilon_2,\delta_0+\delta)$ DP, being the composition of a  and a $(\epsilon_2,\delta)$-DP Gaussian mechanism.
\end{enumerate}


 \section{Proof of \autoref{valid_beta}} \label{proof_valid_beta}
Selecting the parameter $\beta$ is a key component of implementing PTR. First, we specify an interval of values of $\beta$ which will be considered. We start from verifying the following condition; namely, whether for a given value of $\beta$, the solution of problem \eqref{eq:P3} satisfies $\phi(\mathcal{G}) < (1-1/\sqrt{2})\cdot\textsf{GAP}(\mathcal{G})$, as this assumption is required in \ref{theo:upp_bound} to establish that $\theta(\mathcal{G}')$ is a valid upper bound on $ \textsf{LS}(\mathcal{G}')$. As we show next, ensuring that this condition is met translates into a maximum allowable value of $\beta$.

Since $\theta(\mathcal{G})$ is monotonically increasing with $d(\mathcal{G},\mathcal{G}')$, the largest value of $\beta$ which can be satisfied occurs when $d(\mathcal{G},\mathcal{G}') = (1-1/\sqrt{2})\cdot\textsf{GAP}(\mathcal{G}) - 1$. From \eqref{eq:P3}, this corresponds to the following upper bound on $\beta$.
\begin{equation}
\small
    \beta_u:= \frac{2\sqrt{2}}{\textsf{GAP}(\mathcal{G})} 
    \biggl[ 2-\sqrt{2} + \sqrt{v_{\pi(1)}^2+v_{\pi(2)}^2} \biggr] 
\end{equation}
By the same principle, the largest value of $\beta$ for which $d(\mathcal{G},\mathcal{G}') = 0$ in problem \eqref{eq:P3} occurs for
\begin{equation}
\small
    \beta_l := \frac{2\sqrt{v_{\pi(1)}^2+v_{\pi(2)}^2}}{\textsf{GAP}(\mathcal{G})},
\end{equation}
which corresponds to the upper bound on the local $\ell_2$ sensitivity of $\vv$ on $\setG$. 
Hence, for any choice of $\beta \in (\beta_l,\beta_u)$, the statistic $\phi(\cdot)$ can be computed according to \eqref{eq:phi}.

\section{Proof of \autoref{ptr_success}} \label{proof_success}
We will utilize the following fact regarding the Laplace distribution. 
\noindent{\bf Fact 1:} Let $Z \sim {\sf Lap}(0,b)$. Then,
\begin{equation}
    {\sf Prob}(|Z| \geq tb) = \exp(-t).
\end{equation}
In particular, if $b = GS_{\phi}/\epsilon$ and $t = c\cdot\log(1/\delta)$ (where $c >0$), then 
$${\sf Prob}(|Z| \geq c\cdot\log(1/\delta)/\epsilon) = \delta^c.$$

\noindent In the PTR algorithm, after computing $\phi(\cdot)$ in step $2$, we add noise $Z \sim {\sf Lap}(0,GS_{\phi}/\epsilon_1)$ to obtain the noisy statistic $\hat{\phi}(\setG) = \phi(\setG) + Z$.
Thereafter, we test whether $\hat{\phi}$ exceeds the threshold $GS_{\phi}\log(1/\delta)/\epsilon_1$ to yield a response. Since $\phi(\setG) \geq \tau(\setG)$, the probability of a successful response is at least
\begin{equation}
    \begin{split}
        {\sf Prob}(\hat{\phi}(\setG) \geq GS_{\phi}\log(1/\delta)/\epsilon_1) 
        &= {\sf Prob}(Z \geq GS_{\phi}\log(1/\delta)/\epsilon_1 - \phi(\setG))\\
        &\geq {\sf Prob}(Z \geq GS_{\phi}\log(1/\delta)/\epsilon_1 - \tau(\setG)).
    \end{split}
\end{equation}
Suppose we adjust $\beta$ so that 
\begin{equation}
    \tau(\setG) = (p+GS_{\phi}) \cdot  \log(1/\delta)/\epsilon, \forall \;p \in (0,1].
\end{equation}
Then, the success probability of obtaining a response is at least
\begin{equation}
\label{response_probability}
    \begin{split}
        {\sf Prob}(\hat{\phi}(\setG) \geq GS_{\phi}\log(1/\delta)/\epsilon_1) 
        &\geq {\sf Prob}(Z \geq GS_{\phi}\log(1/\delta)/\epsilon_1-\tau(\setG))\\
        &= {\sf Prob}(Z \geq p\log(1/\delta)/\epsilon_1)\\
        &= 1-{\sf Prob}(Z \leq -p\log(1/\delta)/\epsilon_1)\\
        &= 1-\frac{\exp(-p\log(1/\delta))}{2}\\
        &= 1-\frac{\delta^{p}}{2},
    \end{split}
\end{equation}
where in the second-last step we have invoked Fact $1$ and utilized the fact that the distribution of $Z$ is symmetric about the origin.

\section{Expander graphs} 
\label{expanders}

From equation \eqref{eq:ptr_beta}, it can be seen that for a fixed privacy budget and other algorithm parameters, the proposed bound $\beta$ behaves like 
$$\beta \approx O\biggl(\frac{\sqrt{v_{\pi(1)}^2 + v_{\pi(2)}^2}}{\textsf{GAP}(\mathcal{G})}\biggr),$$ 
where $v_{\pi(1)},v_{\pi(2)}$ are the two largest entries of the eigen-vector $\mathbf{v}$. This is in line with the local sensitivity bound derived in Theorem 1. Hence, $\beta$ being small depends on (1) the gap being large, {\em and} (2) the ``energy spread'' in the entries of $\mathbf{v}$ being small. 

We now show that there exists a family of graphs for which both conditions are fulfilled - 
specifically, the class of expander graphs \cite{hoory2006expander}. Following the terminology of \cite{hoory2006expander}, we designate a graph $\mathcal{G}$ as being an $(n,d,\alpha)$-expander if it fulfills the following conditions.
\begin{enumerate}
    \item [C1:] $\mathcal{G}$ has $n$ vertices.
    \item [C2:] Every vertex has degree $d$, i.e., $\mathcal{G}$ is $d$-regular.
    \item [C3:] The second largest eigen-value of the adjacency matrix (in magnitude) is $|\lambda_2| \leq \alpha d$, where $\alpha \in (0,1)$, and $1-\alpha$ represents the expansion coefficient. Hence, smaller values of $\alpha$ correspond to higher expansion. 
\end{enumerate}
For such a family of graphs, the following facts are known.
\begin{enumerate}
    \item[F1:] The principal eigen-vector $\mathbf{v} = \frac{1}{\sqrt{n}}\mathbf{1}_n$.
    \item[F2:] The largest eigen-value of the adjacency matrix of $\mathcal{G}$ is $\lambda_1 = d$. 
    \item[F3:] For fixed $d$, as $n \rightarrow \infty$ it holds that $|\lambda_2| \geq 2\sqrt{d-1} - o_n(1)$
\end{enumerate}
From these facts, we obtain that (a): $v_{\pi(1)}^{2} = v_{\pi(2)}^{2} = 1/n$, i.e., the entries of $\mathbf{v}$ are uniformly spread out in terms of energy. In addition,
(b): the spectral gap 
$\textsf{GAP}(\mathcal{G}) = \lambda_1 - |\lambda_2|$ satisfies
$$d(1-\alpha) \leq \textsf{GAP}(\mathcal{G}) \leq d - 2\sqrt{d-1}+ o_n(1)$$
Hence, the main figure of merit $\frac{\sqrt{v_{\pi(1)}^2 + v_{\pi(2)}^2}}{\textsf{GAP}(\mathcal{G})}$ can be sandwiched as 
\begin{equation}
    \frac{\sqrt{2}}{\sqrt{n}[2\sqrt{d-1}+ o_n(1)]} \leq
    \frac{\sqrt{v_{\pi(1)}^2 + v_{\pi(2)}^2}}{\textsf{GAP}(\mathcal{G})} \leq \frac{\sqrt{2}}{\sqrt{n}d(1-\alpha)}
\end{equation}
Suppose that $d(1-\alpha)$ exceeds the threshold $t$ - a quick calculation reveals that $d \geq 12$ is sufficient for every admissible $\alpha \in (0,1)$. As a consequence, $\textsf{GAP}(\mathcal{G}) \geq t$. 
Then, for a large $n$ and a fixed $d$, $\beta = \Theta(1/\sqrt{n})$ and the algorithm will release outputs with a small level of noise. Although real-world graphs do not conform precisely to such a model, empirical studies reveal that they can possess good expansion properties \cite{malliaros2011expansion}, which makes them a good candidate for our PTR algorithm.

\section{Proof of Proposition \autoref{ptr_parameters}}
\label{proof_parameters}

Recall the value of $\beta$ employed in \eqref{eq:ptr_beta}
\begin{equation}
    \label{eq:ptr_beta2}
    \beta = \frac{2}{\textsf{GAP}(\setG)} \cdot \biggl[\frac{2(p+GS_{\phi})\cdot \log(1/\delta)/\epsilon_1 + \textsf{GAP}(\setG)\sqrt{v_{\pi(1)}^2 + v_{\pi(2)}^2 }}{\textsf{GAP}(\setG) - (p+GS_{\phi}) \cdot \log(1/\delta)/\epsilon_1}\biggr].
\end{equation}
In order to reduce the burden of notation, we refer to the following terms in short-hand.
\begin{equation}
\begin{split}
    a:&= \textsf{GAP}(\setG) \\
    b:&= \sqrt{v_{\pi(1)}^2 + v_{\pi(2)}^2 }\\
    \eta&:= (1+GS_{\phi}/p)\log(1/\delta)/\epsilon_1
\end{split}
\end{equation}
Then, $\beta,\beta_l,\beta_u$ can be compactly expressed as 
\begin{equation}
\begin{split}
   \beta &= \frac{2}{a} \cdot \biggl[\frac{2p\eta + ab}{a - p\eta}\biggr],  \\
   \beta_l &= \frac{2b}{a},\\
   \beta_u &= \frac{2\sqrt{2}}{a}\cdot(2-\sqrt{2} + b),
\end{split}
\end{equation}
respectively.

In order to verify for what choices of problem parameters $a,b,\eta,p$ the value of   $\beta$ lies in the interval $(\beta_l,\beta_u)$, we consider two cases.

\noindent $\bullet$ {\bf Case 1:} $\beta > \beta_l.$ This condition is equivalent to
\begin{equation}
    \begin{split}
       & \frac{2}{a} \cdot \biggl[\frac{2p\eta + ab}{a - p\eta}\biggr] 
        > \frac{2b}{a} \\
      \Leftrightarrow 
      & 2p\eta + ab > b(a - p\eta) \\
      \Leftrightarrow 
      & (2+b)p\eta > 0,
    \end{split}
\end{equation}
which is always satisfied provided 
\begin{equation}
    a-p\eta > 0 \Leftrightarrow \eta < \frac{a}{p}.
\end{equation}
Note that this corresponds to the necessary condition 
\begin{equation}
    (p+GS_{\phi})\log(1/\delta)/\epsilon_1 < \textsf{GAP}(\setG)
\end{equation}
for $\beta$ being positive.

\noindent $\bullet$ {\bf Case 2:} $\beta < \beta_u.$ This condition is equivalent to
\begin{equation}
    \begin{split}
      &  \frac{2}{a} \cdot \biggl[\frac{2p\eta + ab}{a - p\eta}\biggr] 
      <  \frac{2\sqrt{2}}{a}\cdot(2-\sqrt{2} + b) \\
      \Leftrightarrow 
      & \frac{2p\eta + ab}{a - p\eta} < \sqrt{2}(2-\sqrt{2} +b) \\
      \Leftrightarrow 
      & 2p\eta + ab < (2(\sqrt{2}-1)+\sqrt{2}b)(a - p\eta) \\
      \Leftrightarrow 
      & \eta p(2 + 2(\sqrt{2}-1)+\sqrt{2}b) < (2(\sqrt{2}-1)+(\sqrt{2}-1)b)a\\
      \Leftrightarrow 
      & \eta < \frac{(2(\sqrt{2}-1)+(\sqrt{2}-1)b)a}{p(2\sqrt{2}+\sqrt{2}b)} \\
      \Leftrightarrow
      &\eta < \frac{(\sqrt{2}-1)(2+b)a}{p\sqrt{2}(2+b)}\\
      \Leftrightarrow
      &\eta < \biggl(1-\frac{1}{\sqrt{2}}\biggr)\frac{a}{p},
    \end{split}
\end{equation}
which is the condition 
\begin{equation}
    \frac{\log(1/\delta)}{\epsilon_1} < \biggl(1-\frac{1}{\sqrt{2}}\biggr)\frac{\textsf{GAP}(\setG)}{(p+GS_{\phi})} < \biggl(1-\frac{1}{\sqrt{2}}\biggr)\textsf{GAP}(\setG),
\end{equation}
which exactly corresponds to the assumption (A2) in Theorem~\ref{theo:upp_bound}.
This completes the proof.

\section{Upper bound of Non-private solution} \label{appendix_upper_bound}
In terms of the quality of the approximate solution, the following result is known.
\begin{proposition}[Adapted from~\cite{papailiopoulos2014finding}]
\label{upper_bound}
For any unweighted graph \(\mathcal{G}\), the optimal size-$k$ edge density is no more than
\begin{equation}
d_k^* \leq \min \left\{ \frac{1}{k(k-1)} \hat{\mathbf{x}}_k^{\top} \hat{\mathbf{A}} \hat{\mathbf{x}}_k + \frac{1}{k-1} |\lambda_2|, \frac{1}{k-1} |\lambda_1|, 1 \right\}.
\end{equation}
\end{proposition}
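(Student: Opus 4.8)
\emph{Proof sketch.} The plan is to write $\vA = \hat{\vA} + \vE$ with $\vE := \vA - \lambda_1\vv\vv^T$ collecting the trailing spectral mass, and to control $\vE$ using the combinatorial structure of $\setX_k$. Fix an optimal D$k$S indicator $\vx^* \in \setX_k$, so that $d_k^* = \tfrac{1}{k(k-1)}(\vx^*)^T\vA\vx^*$, and split
\begin{equation}
(\vx^*)^T\vA\vx^* = (\vx^*)^T\hat{\vA}\vx^* + (\vx^*)^T\vE\vx^*.
\end{equation}
The spectrum of $\vE$ is exactly $\{0,\lambda_2,\dots,\lambda_n\}$, so $\|\vE\|_2 = |\lambda_2|$; moreover $\vx^* \in \{0,1\}^n$ with $\ve^T\vx^* = k$ gives $\|\vx^*\|_2^2 = k$. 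Hence $(\vx^*)^T\vE\vx^* \leq \|\vE\|_2\,\|\vx^*\|_2^2 = |\lambda_2|\,k$.

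Next I would relate $(\vx^*)^T\hat{\vA}\vx^*$ to the quantity produced by the non-private baseline. Since $\hat{\vA} = \lambda_1\vv\vv^T$, we have $(\vx^*)^T\hat{\vA}\vx^* = \lambda_1(\vv^T\vx^*)^2$. The key observation is that $\vv$ is element-wise positive by the Perron--Frobenius theorem (using connectedness of $\setG$; otherwise take $\vv \geq 0$ on the relevant component), so $\vv^T\vx \geq 0$ for every $\vx \in \setX_k$, and therefore maximizing $\vv^T\vx$ over $\setX_k$ is equivalent to maximizing $(\vv^T\vx)^2$. Consequently $(\vv^T\vx^*)^2 \leq (\vv^T\hat{\vx}_k)^2$, whence $(\vx^*)^T\hat{\vA}\vx^* \leq \lambda_1(\vv^T\hat{\vx}_k)^2 = \hat{\vx}_k^T\hat{\vA}\hat{\vx}_k$. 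Combining with the previous bound and dividing by $k(k-1)$ yields the first term in the minimum.

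The remaining two terms follow directly. The inequality $d_k^* \leq 1$ is immediate since a subgraph on $k$ vertices has at most $\binom{k}{2}$ edges. For $d_k^* \leq |\lambda_1|/(k-1)$, note that $\lambda_1 = \rho(\vA) = \lambda_{\max}(\vA) = |\lambda_1| \geq 0$ because $\vA$ is entrywise nonnegative, so $(\vx^*)^T\vA\vx^* \leq \lambda_{\max}(\vA)\,\|\vx^*\|_2^2 = |\lambda_1|\,k$, and dividing by $k(k-1)$ finishes it. Taking the minimum of the three bounds completes the argument.

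There is no serious obstacle here: the only points needing care are the equivalence between maximizing $\vv^T\vx$ and $(\vv^T\vx)^2$, which hinges on the sign of $\vv$, and the bookkeeping that $\|\vE\|_2 = |\lambda_2|$ (not $|\lambda_1|$) because the principal component has been subtracted off. The rest is Rayleigh-quotient and norm-inequality manipulation.
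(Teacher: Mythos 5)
Your proof is correct. The paper states this proposition as adapted from \cite{papailiopoulos2014finding} without reproducing the argument, so there is no in-paper proof to compare against; the route you take --- decomposing $\vA = \hat{\vA} + \vE$ with $\|\vE\|_2 = |\lambda_2|$, applying Rayleigh-quotient bounds with $\|\vx^*\|_2^2 = k$, and invoking Perron--Frobenius ($\vv \geq 0$) to convert maximization of $\vv^T\vx$ into maximization of $(\vv^T\vx)^2$ over $\setX_k$ --- is exactly the standard argument underlying the cited result. One bookkeeping remark: the paper's definition \eqref{eq:dks} has $\binom{k}{2}$ in the denominator while both the proposition and your proof use $k(k-1)$; since $\vx^T\vA\vx$ double-counts each internal edge, $k(k-1)$ is the consistent normalization, and you have used it correctly.
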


\section{Additional Experimental Results}

\label{additional_result}
In this section, additional results for each of the proposed algorithms are provided individually. The performance of \autoref{alg:ptr} for D$k$S and top-$k$ eigenscore subset extraction is evaluated in \autoref{fig:ptr} and \autoref{fig:ptr2} on real-world datasets for different privacy budgets. Similarly, the performance of \autoref{alg:algo1} is depicted in \autoref{fig:ppm} and \autoref{fig:ppm2}.
\begin{figure}[t]
    \centering
    \includegraphics[width=0.31\textwidth]{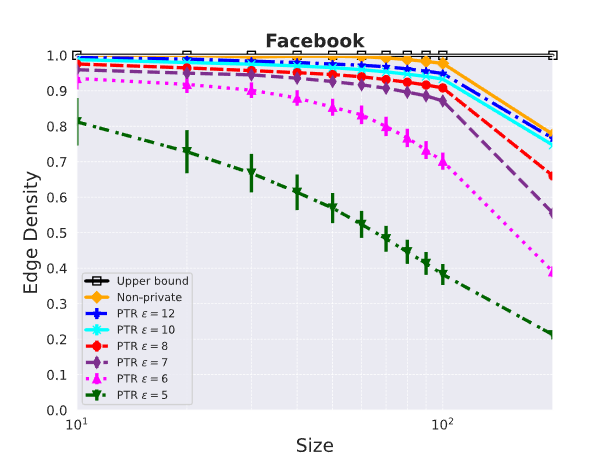}
    \includegraphics[width=0.31\textwidth]{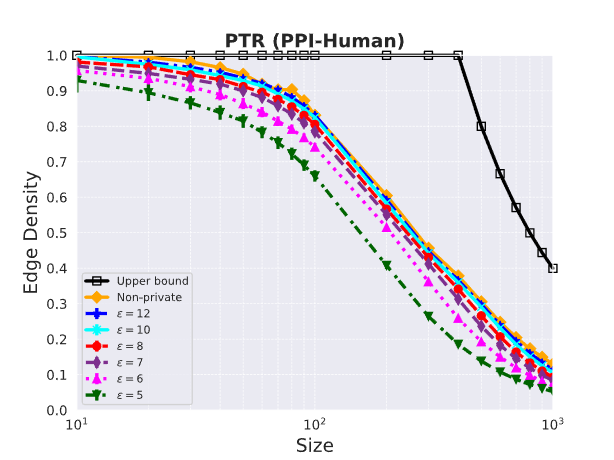}
    \includegraphics[width=0.31\textwidth]{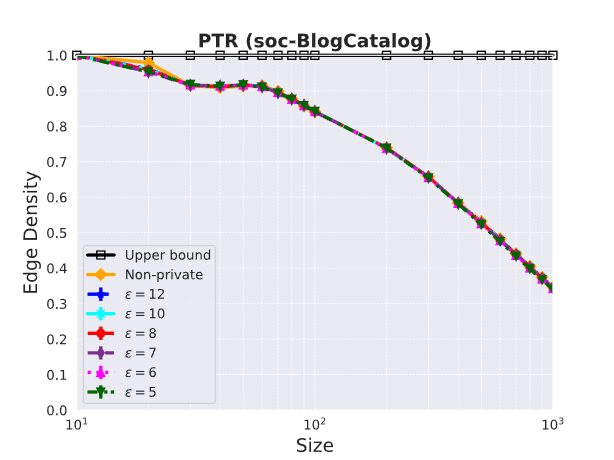}

    \includegraphics[width=0.31\textwidth]{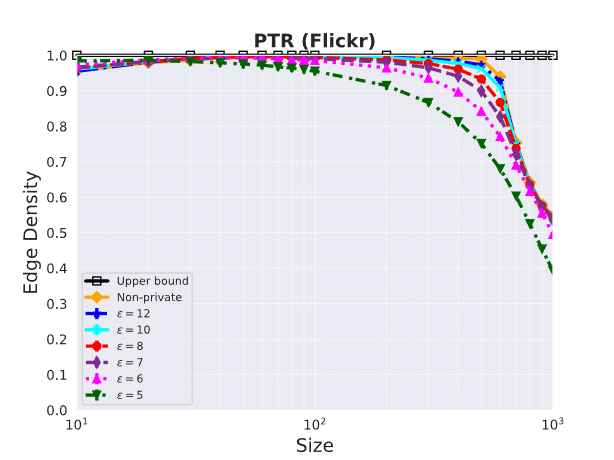}
    \includegraphics[width=0.31\textwidth]{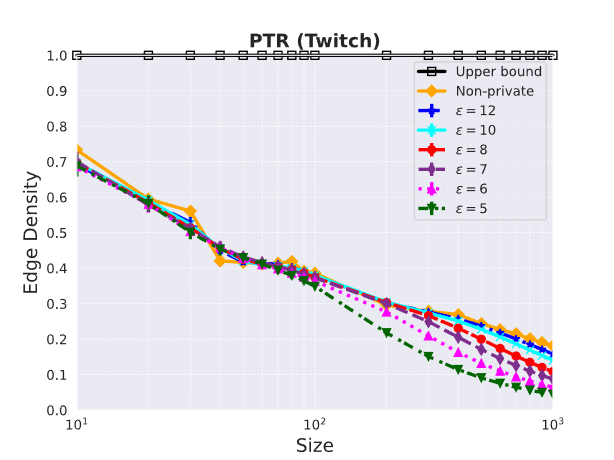}
    \includegraphics[width=0.31\textwidth]{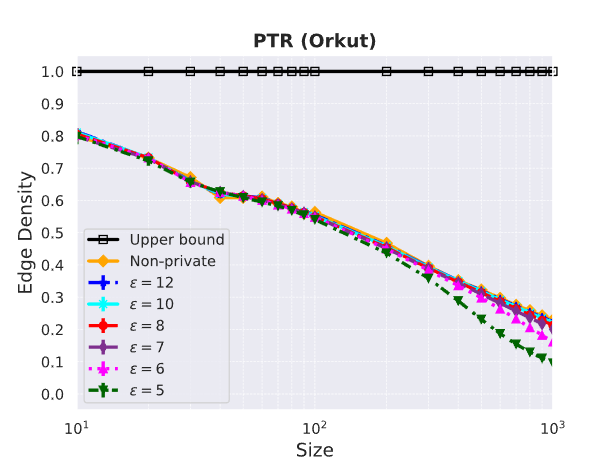}

    \caption{Edge density versus subgraph size ($k$) for Algorithm~\ref{alg:ptr} under $(\epsilon_0,\delta_0)=(1,7\times10^{-7})$ and varying $\epsilon_1 = \epsilon_2 = \epsilon/2$ values across real-world datasets. 
    The privacy parameter is set to $\delta = \log(m)/m$, where $m$ denotes the number of edges in each network.}
    \label{fig:ptr}
\end{figure}


\begin{figure}[t]
    \centering
    \includegraphics[width=0.31\textwidth]{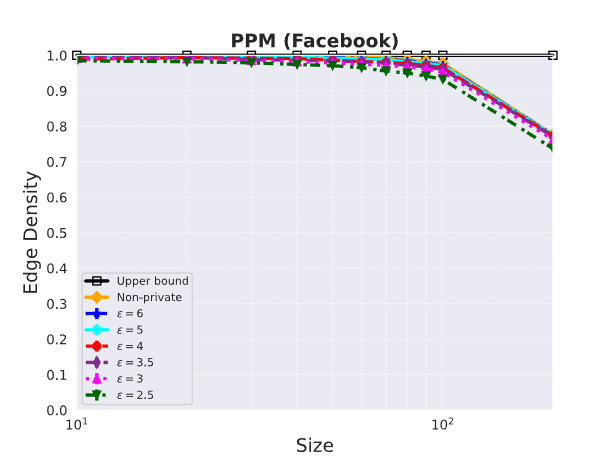}
    \includegraphics[width=0.31\textwidth]{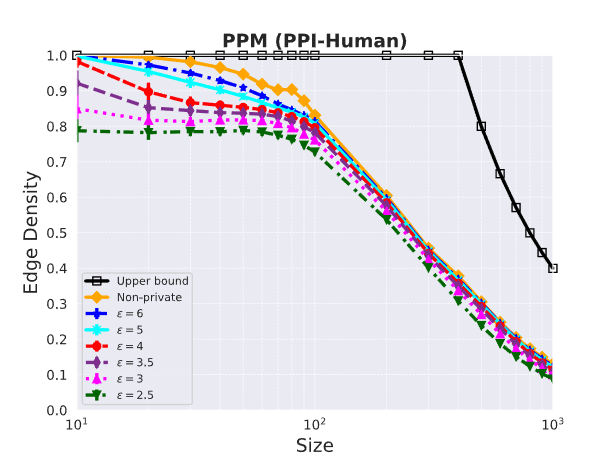}
    \includegraphics[width=0.31\textwidth]{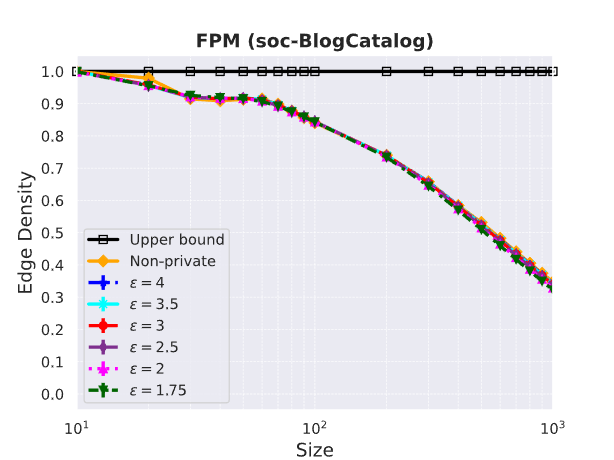}

    \includegraphics[width=0.31\textwidth]{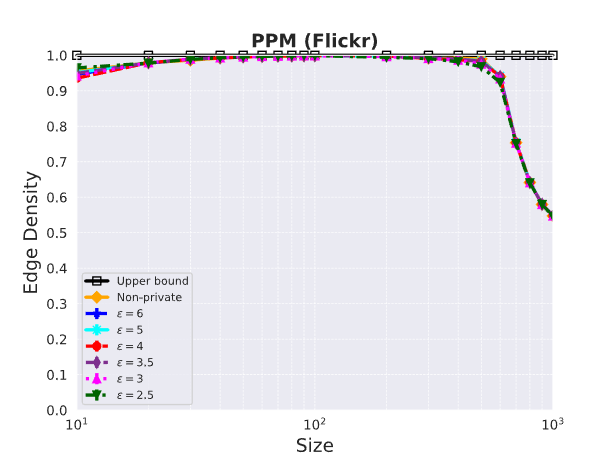}
    \includegraphics[width=0.31\textwidth]{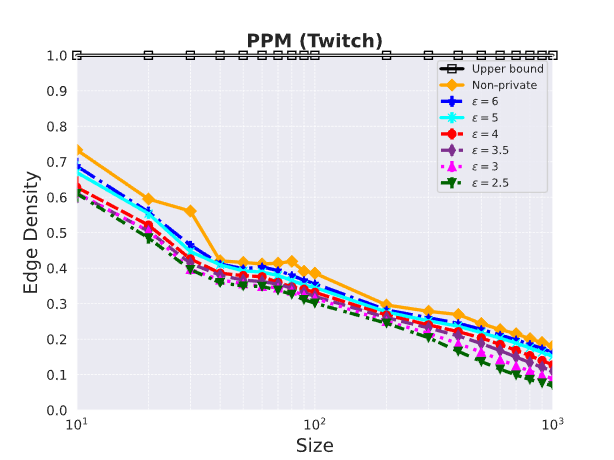}
    \includegraphics[width=0.31\textwidth]{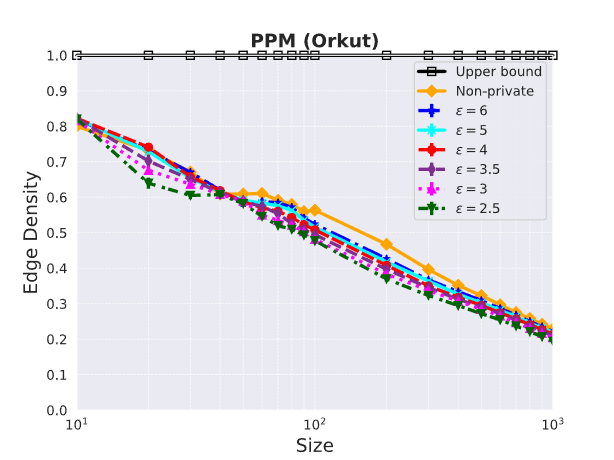}

    \caption{Edge density versus subgraph size ($k$) for Algorithm~\ref{alg:algo1} across real-world datasets under varying $\epsilon$ values with $\delta = \log(m)/m$.}
    \label{fig:ppm}
\end{figure}
\vspace{-1em} 

\begin{figure}[t]
    \centering
    \includegraphics[width=0.31\textwidth]{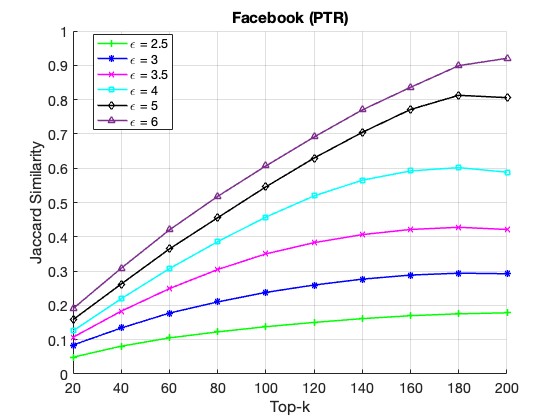}
    \includegraphics[width=0.31\textwidth]{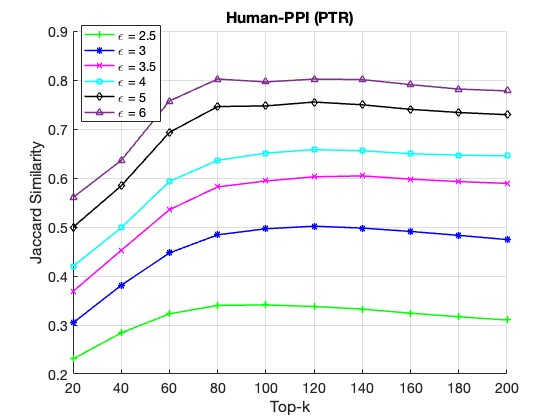}
    \includegraphics[width=0.31\textwidth]{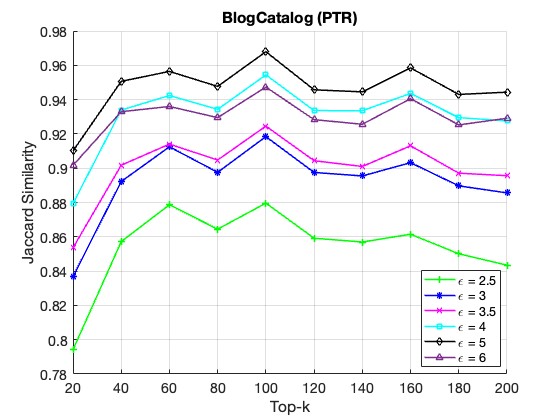}

    \includegraphics[width=0.31\textwidth]{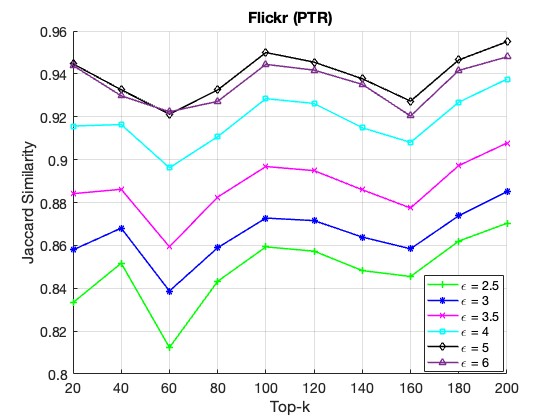}
    \includegraphics[width=0.31\textwidth]{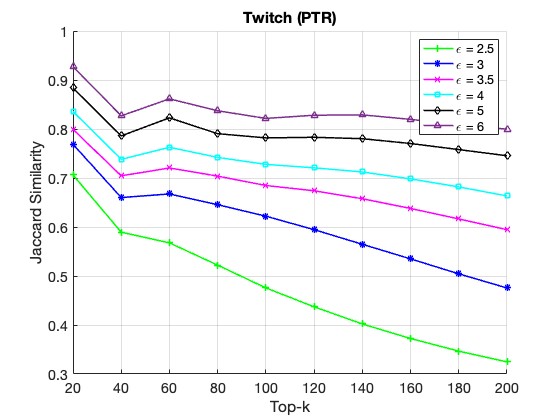}

    \caption{Jaccard similarity versus subset size ($k$) for Algorithm~\ref{alg:ptr} under $(\epsilon_0,\delta_0)=(1,7\times10^{-7})$ and varying $\epsilon_1 = \epsilon_2 = \epsilon$ values across real-world datasets. 
    The privacy parameter is set to $\delta = \log(m)/m$, where $m$ denotes the number of edges in each network.}
    \label{fig:ptr2}
\end{figure}

\begin{figure}[t]
    \centering
    \includegraphics[width=0.31\textwidth]{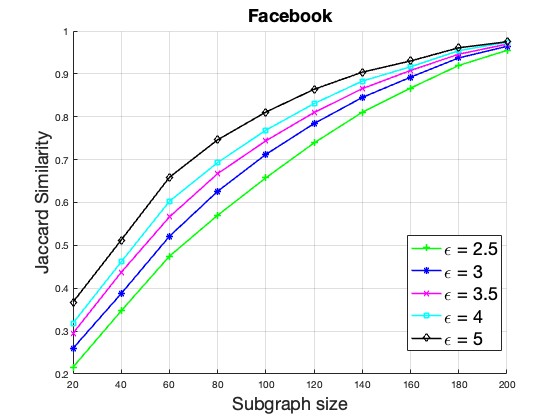}
    \includegraphics[width=0.31\textwidth]{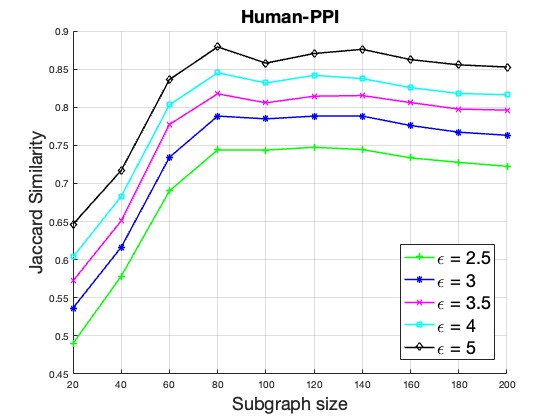}
    \includegraphics[width=0.31\textwidth]{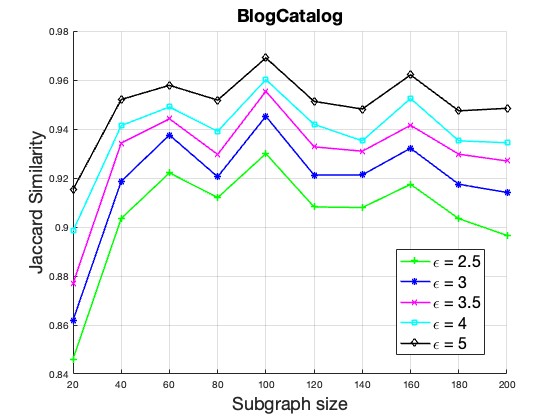}
    \includegraphics[width=0.31\textwidth]{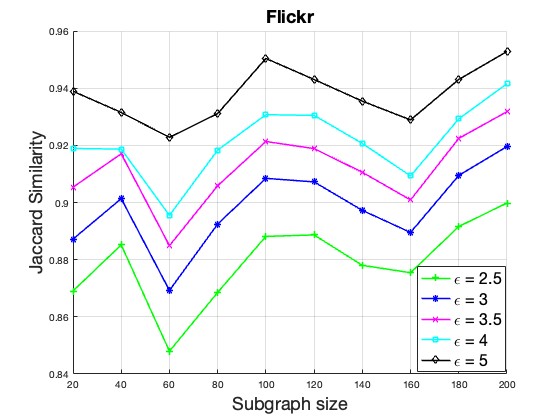}
    \includegraphics[width=0.31\textwidth]{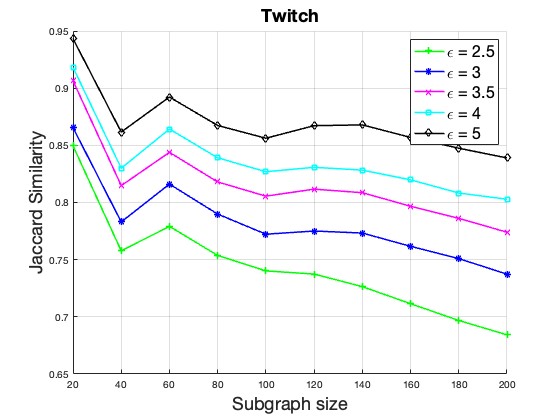}

    \caption{Jaccard similarity versus subset size ($k$) for Algorithm~\ref{alg:algo1} under varying $\epsilon$ values with $\delta = \log(m)/m$ across real-world datasets.}
    \label{fig:ppm2}
\end{figure}
\end{document}